\newcommand{\stmts}{\mathsf{Stmts}}
\newcommand{\stmtsp}{\mathsf{Stmts}_p}
\newcommand{\stmtsi}{\mathsf{Stmts}_d}
\newcommand{\stmtsc}{\mathsf{Stmts}_c}
\newcommand{\stmt}{\mathit{stmt}}
\newcommand{\stmtp}{\mathit{stmt}_p}
\newcommand{\stmti}{\mathit{stmt}_d}
\newcommand{\stmtc}{\mathit{stmt}_c}
\newcommand{\redtext}[1]{{\color{red}{\footnote{{\color{red}#1}}}}}
\newcommand{\tomas}[1]{\redtext{Tomas: #1}}
\newcommand{\marek}[1]{\redtext{Marek: #1}}
\newcommand{\myblacktriangle}{\raisebox{0.3ex}{\relsize{-2}$\blacktriangleright$}}
\newcommand{\becomes}{\leadsto}
\newcommand{\succo}{\triangleright}
\newcommand{\succspc}{\mathrel{\myblacktriangle}}
\newcommand{\integers}{\mathbb Z}
\newcommand{\op}{\delta}
\newcommand{\opof}[1]{\delta_{#1}}
\newcommand{\code}[1]{\texttt{#1}}
\newcommand{\val}{\sigma}
\newcommand{\mem}{\mathit{mem}}
\newcommand{\memof}[1]{\mem(#1)}
\newcommand{\dom}{\mathsf{dom}}
\newcommand{\domof}[1]{\dom(#1)}
\newcommand{\img}{\mathsf{img}}
\newcommand{\imgof}[1]{\img(#1)}
\newcommand{\vsubst}{\lambda}
\newcommand{\rsubst}{\pi}
\newcommand{\cfg}{\mathit{cfg}}
\newcommand{\acfg}{\mathit{acfg}}
\newcommand{\semof}[1]{{\llbracket #1 \rrbracket}}
\newcommand{\next}{\mathit{next}}
\newcommand{\prev}{\mathit{prev}}
\newcommand{\selectors}{\mathbb{S}}
\newcommand{\selp}{\mathbb{S}_p}
\newcommand{\seld}{\mathbb{S}_d}
\newcommand{\variables}{\mathbb{V}}
\newcommand{\varp}{\mathbb{V}_p}
\newcommand{\varc}{\mathbb{V}_c}
\newcommand{\regions}{\mathbb{R}}
\newcommand{\object}{o}
\newcommand{\objectsof}[1]{\mathit{obj}(#1)}
\newcommand{\regionsof}[1]{\mathit{reg}(#1)}
\newcommand{\dlss}{\mathbb{D}}
\newcommand{\sel}{\mathit{sel}}
\newcommand{\selmap}{\mathsf{val}}
\newcommand{\valof}[2]{#1(#2)}
\newcommand{\loc}{\ell}
\newcommand{\initialloc}{\loc_I}
\newcommand{\finalloc}{\loc_F}
\newcommand{\sop}{\Delta}
\newcommand{\front}{\region}
\newcommand{\last}{\region'}
\newcommand{\chain}{\tau}
\newcommand{\chainex}{\tau_\mathtt{pb}}
\newcommand{\chainof}[1]{\chain[#1]}
\newcommand{\relof}[1]{\succo_{#1}}
\newcommand{\edge}{e}
\newcommand{\edgef}{f}
\newcommand{\edges}{E}
\newcommand{\locs}{L}
\newcommand{\edgeof}[3]{\langle #1, #2, #3\rangle}
\newcommand{\path}{p}
\newcommand{\prefix}{p_\mathtt{p}}
\newcommand{\suffix}{p_\mathtt{s}}
\newcommand{\impl}{p_\mathtt{i}}
\newcommand{\succspcof}[1]{\mathrel{\myblacktriangle_{#1}}}
\newcommand{\rete}{\Upsilon}
\newcommand{\reteof}[1]{\rete(#1)}
\newcommand{\sopof}[1]{\sop_{#1}}
\newcommand{\inp}{\mathsf{in}}
\newcommand{\outp}{\mathsf{out}}
\newcommand{\inof}[1]{\val^\inp_{#1}}
\newcommand{\outof}[1]{\val^\outp_{#1}}
\newcommand{\rloc}{\ell}
\newcommand{\rlof}[1]{\rloc_{#1}}
\newcommand{\conf}{c}
\newcommand{\pc}{\conf}
\newcommand{\sconf}{C}
\newcommand{\spc}{\sconf}
\newcommand{\smg}{G}
\newcommand{\updateof}[2]{(#1,#2)}
\newcommand{\supdateof}[3]{(#1,#2, #3)}
\newcommand{\update}{u}
\newcommand{\supdate}{U}
\newcommand{\dll}{\mathit{l}}
\newcommand{\dls}{d}
\newcommand{\aarg}[2]{#1~{\it //~#2}}
\newcommand{\aset}{\ensuremath{\longleftarrow}}
\newcommand{\substruct}{\preceq}
\newcommand{\mg}{g}
\newcommand{\region}{r}
\newcommand{\repre}{\mathit{repre}}
\newcommand{\repreof}[1]{\repre(#1)}
\newcommand{\regabs}{\repre}
\newcommand{\cshape}{S}
\newcommand{\shapes}[1]{\mathit{csh}(#1)}
\newcommand{\succset}[2]{\mathsf{succ}(#2)}
\newcommand{\predset}[2]{\mathsf{pred}_{#1}(#2)}
\newcommand{\assign}{\nu}
\newcommand{\Undef}{\top}
\newcommand{\Null}{\bot}
\newcommand{\succcs}{\myblacktriangle^{\mathsf{cs}}}
\newcommand{\opdataconst}{\op_{const}}
\newcommand{\opname}{op}
\newcommand{\stmtskip}{\code{skip}}
\newcommand{\computation}{\phi}
\newcommand{\comp}{\computation}
\newcommand{\symtrace}{\Phi}
\newcommand{\initrestrof}[2]{{#2}\bigl|_{#1}}
\newcommand{\opcode}{\overline{y}=\opname(\overline{x})}
\newcommand{\codesel}{\mathtt{\shortrightarrow}}
\newcommand{\figincode}{(a)}
\newcommand{\figoutcode}{(b)}
\newcommand{\figcfgfirst}{(c)}
\newcommand{\spring}{\includegraphics[width=2.6mm]{}}
\newcommand{\dashedcircle}{\includegraphics[width=2.6mm]{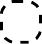}}
\begin{document}

\title{From Low-Level Pointers to High-Level Containers}

\author{
	Kamil Dudka$^1$ \and
	Luk\'a\v s Hol\'ik$^1$ \and
	Petr Peringer$^1$ \and
	Marek Trt\'ik$^2$ \and
	Tom\'a\v s Vojnar$^1$
}

\institute{$^1$~ FIT, Brno University of Technology~~$^2$~LaBRI, Bordeaux}

\booktitle{From Low-Level Pointers\\[2mm] to High-Level Containers} 
{} 
{FIT BUT Technical Report Series}
{Kamil Dudka,
 Luk\'a\v s Hol\'ik,
 Petr Peringer,\\
 Marek Trt\'ik, and
 Tom\'a\v s Vojnar} 
{Technical Report No. FIT-TR-2015-03\\[2mm]
 Faculty of Information Technology, Brno University of Technology}
{Last modified: \today}

\eject

\pagestyle{empty}
\noindent\textbf{NOTE:} This technical report contains an extended version of
the VMCAI'16 paper with the same name.

\eject

\addtocounter{page}{-2}


\pagestyle{plain}


\maketitle

\begin{abstract}
We propose a method that transforms a C program manipulating containers using
low-level pointer statements into an equivalent program where the containers are
manipulated via calls of standard high-level container operations like
\texttt{push\_back} or \texttt{pop\_front}.
The input of our method is a C program annotated by a special form of shape
invariants which can be obtained from current automatic shape analysers after a
slight modification.
The resulting program where the low-level pointer statements are summarized into
high-level container operations is more understandable and (among other possible
benefits) better suitable for program analysis since the burden of dealing with
low-level pointer manipulations gets removed.
We have implemented our approach and successfully tested it through a number of
experiments with list-based containers, 
including experiments with simplification of program analysis by separating
shape analysis from analysing data-related properties.
\end{abstract}

\section{Introduction} \label{section:intro}

We present a novel method that recognizes low-level pointer implementations of
operations over containers in \texttt C programs and transforms them to calls of
standard high-level container operations, such as \texttt{push\_back},
\texttt{insert}, or \texttt{is\_empty}. Unlike the related works that we discuss
below, our method is fully automated and yet it guarantees preservation of the
original semantics. Transforming a program by our method---or even just the
recognition of pointer code implementing container operations that is a~part of
our method---can be useful in many different ways, including simplification of
program analysis by separating shape and data-related analyses (as we show later
on in the paper), automatic parallelization \cite{laurie:parallelizing},
optimization of garbage collection \cite{shaham:effectiveness}, debugging and
automatic bug finding \cite{trishul:cache}, profiling and
optimizations~\cite{raman:recursive}, general understanding of the code,
improvement of various software engineering tasks~\cite{demsky:inference},
detection of abnormal data structure behaviour \cite{jump:dynamic}, or
construction of program signatures \cite{cozzie:digging}.

We formalize the main concepts of our method instantiated for
\texttt{NULL}-terminated doubly-linked lists (DLLs).
However, the concepts that we introduce can be generalized (as we discuss
towards the end of the paper) and used to handle code implementing other kinds
of containers, such as singly-linked lists, circular lists, or trees, as well.

We have implemented our method and successfully tested it through a number of
experiments with programs using challenging pointer operations.
%
%
Our benchmarks cover a~large variety of program constructions implementing
containers based on \texttt{NULL}-terminated DLLs.
%
%
We have also conducted experiments showing that our method can be instantiated
to other kinds of containers, namely circular DLLs as well as DLLs with
head/tail pointers
(our implementation is limited to various kinds of lists due to limitations of
the shape analyser used).
%
%
%
We further demonstrate that our method can simplify verification of pointer
programs by separating the issue of shape analysis from that of verification of
data-related properties. 
%
%
Namely, we first obtain shape invariants from a~specialised shape analyser
(Predator~\cite{Predator2013} in our case), use it within our method to
transform the given pointer program into a container program, and then use a
tool that specialises in verification of data-related properties of container
programs (for which we use the J2BP tool~\cite{j2bpURL,Parizek2012}).

\enlargethispage{4mm}

\paragraph{Overview of the proposed method.}

\begin{figure}[!t]
\vspace{-4mm}
\hspace{-2,5mm}
\begin{tabular}{ll}
\begin{tabular}{l}
\begin{tabular}{l}
\includegraphics[width=69mm]{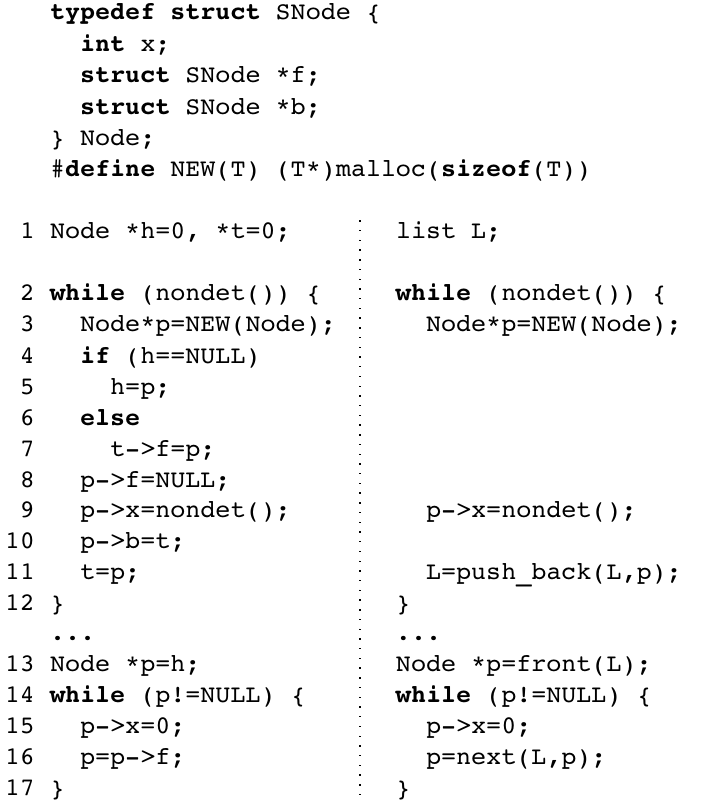}
\\
\hspace{13mm}\figincode\hspace{27mm}\figoutcode
\end{tabular}
\end{tabular}
&
\begin{tabular}{c}
\hspace{-4mm}
\vspace*{-2mm}
\includegraphics[width=54mm]{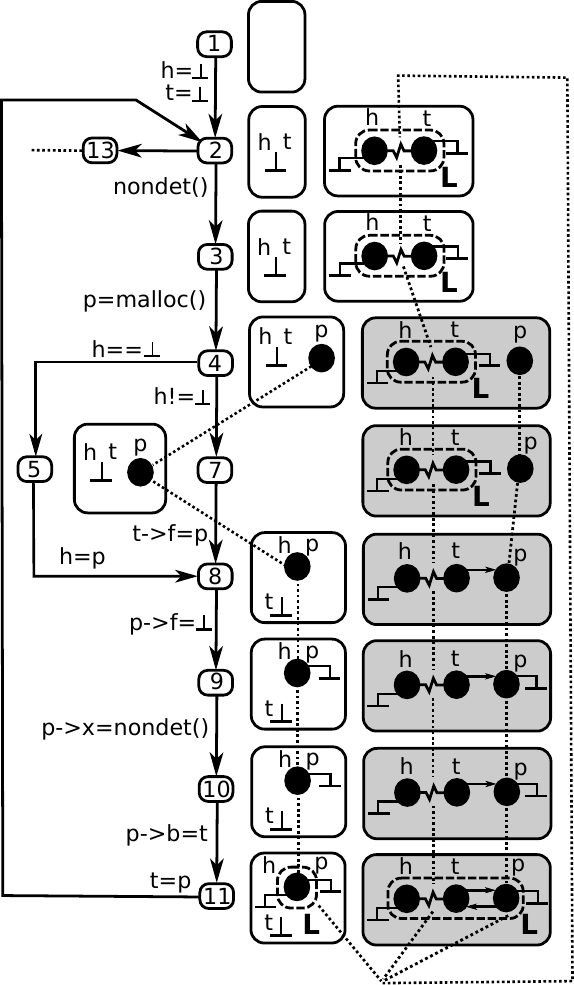}
\\
\figcfgfirst
\end{tabular}
\end{tabular}
\vspace*{-4mm}
\caption{
A running example.
\textbf{\figincode} A \texttt{C} code using low-level pointer manipulations. 
\textbf{\figoutcode} The transformed \textit{pseudo}-\texttt{C++} code using container operations. 
\textbf{\figcfgfirst} A part of the CFG of the low-level code from Part (a) corresponding to
lines 1-12, annotated by shape invariants.
%
%
}
\vspace*{-3mm}
\label{fig:allinone}
\end{figure}

We demonstrate our method on a running example given in
Fig.~\ref{fig:allinone}{\figincode}.
It shows a \texttt{C} program that creates a DLL of non-deterministically chosen
length on lines 2--12 and then iterates through all its elements on lines
13--17.
Fig.~\ref{fig:allinone}{\figoutcode} shows the code transformed by our method.
It is an equivalent \texttt{C++}-like program 
where the low-level pointer operations are replaced by calls of container
operations which they implement.
Lines 4--8, 10, and 11 are identified as $\code{push\_back}$ (i.e., insertion of
an element at the end of the list), line 13 as setting an iterator to the first
element of a list, and line 16 as a shift of the iterator. 
%

The core of our approach is recognition of low-level pointer implementations of
\emph{destructive container operations}, i.e., those that change the shape of
the memory, such as $\code{push\_back}$ in our example.
In particular, we search for control paths along which pieces of the memory
evolve in a way corresponding to the effect of some destructive container
operations.
This requires\begin{inparaenum}[(1)]
\item a control-flow graph with edges annotated by an (over)approximation of the
effect of program statements on the memory (i.e., their semantics restricted to
the reachable program configurations) and
\item a specification of the operational semantics of the container operations
that are to be searched for.
\end{inparaenum}

\enlargethispage{4mm}

We obtain an approximation of the effect of program statements by extending
current methods of shape analysis. These analyses are capable of inferring a
shape invariant for every node of the control-flow graph (CFG). The shape
invariants are based on using various abstract objects to represent concrete or
summarized parts of memory. For instance, tools based on separation
logic~\cite{ohearn:local} use points-to and inductive predicates;
TVLA~\cite{Sagiv02} uses concrete and summary nodes; the graph-based formalism
of~\cite{Predator2013} uses regions and list-segments; and sub-automata
represent list-like or tree-like structures in~\cite{habermehl:forest}. In all
these cases, it is easy to search for configurations of abstract objects that
may be seen as having a \emph{shape of a container} (i.e., a list-like
container, a tree-like container, etc.) within every possible computation. This
indicates that the appropriate part of memory may be used by the programmer to
implement a container. 

To confirm this hypothesis, one needs to check that this part of memory is
actually \emph{manipulated as a container} of the appropriate type across all
statements that work with~it.
%
%
%
Additional information about the dynamics of memory changes is needed.
%
%
In particular, we need to be able to track the lifecycle of each part of the
memory through the whole computation,
to identify its abstract encodings in successive abstract con\-fi\-gu\-ra\-tions,
and by comparing them, to infer how the piece of the memory is changing.
We therefore need the shape analyser to explicitly tell us which abstract
objects of a successor configuration are created from which abstract objects of
a predecessor configuration
or, in other words, which abstract objects in the predecessor configuration
denote parts of the memory intersecting with the denotation of an object in the
successor configuration.
We say that the former objects  are \emph{transformed} into the latter ones,
and we call the relationship a~\emph{transformation relation}.
A transformation relation is normally not output by shape analysers, however,
tools such as Predator~\cite{Predator2013} (based on SMGs), Slayer
\cite{berdine:slayer} (based on separation logic), or Forester
\cite{habermehl:forest} (based on automata) actually work with it at least
implicitly when applying abstract transformers. 
We only need them to \mbox{output it.} 

The above concepts are illustrated in Fig.~\ref{fig:allinone}{\figcfgfirst}. It
shows a part of the CFG of the program from Fig.~\ref{fig:allinone}{\figincode}
with lines annotated by the shape invariant in the round boxes  on their right. 
The invariant is expressed in the form of so-called symbolic memory graphs
(SMGs), the abstract domain of the shape analyser Predator~\cite{Predator2013},
simplified to a~bare minimum sufficient for exposing main concepts of our
method.
The basic abstract objects of SMGs are shown as the black circles in the figure.
They represent continuous memory regions allocated by a single allocation
command.  Every region has a \code{next} selector, shown as the line on its
right-top leading into its target region on the right, and \code{prev} selector,
shown as a line on its left-bottom leading to the target region on the left.
The $\Null$ stands for the value $\code{NULL}$. Pairs of regions connected by
the \spring\ represent the second type of abstract object, so called
doubly-linked segments (DLS). They represent doubly-linked lists of an arbitrary
length connecting the two regions.  The dashed envelope \dashedcircle\ indicates
a memory that has the shape of a container, namely of a \code{NULL}-terminated
doubly-linked list.
The transformation relation between objects of successive configurations is
indicated by the dashed lines.
Making the tool Predator \cite{Predator2013} output it was easy, and we believe
that it would be easy for other tools as well.


Further, we propose a specification of operational semantics of the container
operations which has the same form as the discussed approximation of operational
semantics of the program. It consists of input and output symbolic
configuration, with abstract objects related by a transformation relation. 
For example, Fig.~\ref{fig:userPushBack} shows a specification of
${\code{push\_back}}$ as an operation which appends a region pointed by a
variable $y$ to a doubly-linked list pointed by $x$. The left pair specifies the
case when the input DLL is empty, the right pair the case when it is not.

\begin{wrapfigure}[7]{r}{33mm}
	\vspace*{-8mm}
	~~\includegraphics[width=3cm]{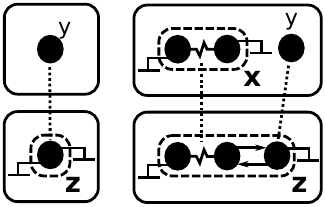}
	\vspace*{-3mm}
	\caption{Specification of $z=push\_back(x,y)$.
        }
	\label{fig:userPushBack}
\end{wrapfigure}
To find an implementation of thus specified \code{push\_back}, semantic
annotations of the CFG are searched for chains of the transformation relation
matching the specification. That is, they start and end by configurations that
include the input and the output of the \code{push\_back} specification, resp.,
and the composition of the transformation relation between these
configurations matches the transformation relation specified.

In Fig.~\ref{fig:allinone}{\figcfgfirst}, one of the chains implementing
\code{push\_back} is shown as the sequence of greyed configurations. It matches
the case of the non-empty input DLL on the right of Fig.~\ref{fig:userPushBack}. 
Destructive program statements within the chain implementing the found operation
are candidates for replacement by a call of the container operation. In the
figure, lines 7, 8, 10 are candidates for replacement by \code{L=push\_back(L,p)}.
%
%
However, the replacement can be done only if a set of chains is found
that together gives a consistent image about a use of containers in the whole
program. In our example, it is important that on the left of the greyed chain,
there is another chain implementing the case of \code{push\_back} for empty DLLs (matching the left part of the specification in Fig.~\ref{fig:userPushBack}).

After identifying containers and destructive container operations as discussed
above, we search for implementations of non-destructive operations (like
iterators or emptiness tests).  This leads to replacement of lines 13 and 16 in
Fig.~\ref{fig:allinone}{\figincode} by the initialization and shift of the
iterator shown on the same lines in Fig.~\ref{fig:allinone}{\figoutcode}. This
step is much simpler, and we will only sketch it in the paper.
We then simplify the code using standard static analysis. In the example, the
fact that $h$ and $t$ become a dead variable until line 13 leads to removing
lines 4, 5, 6, and 11.

Our method copes even with cases when the implementation of a container
operation is interleaved with other program statements provided that they do not
interfere with the operation (which may happen, e.g., when a manipulation of
several containers is interleaved).
Moreover, apart from the container operations, arbitrary low-level operations
can be used over the memory structures linked in the containers provided they
do not touch the container~linking~fields.


\paragraph*{Related work.}

There have been proposed many dynamic analyses for recognition of heap data
structures, such as,
e.g.,~\cite{haller:reverse,jung:ddt,pheng:dynamic,raman:recursive,white:identifying,cozzie:digging}.
These approaches are typically based on observing program executions and
matching observed heap structures against a knowledge base of predefined
structures. Various kinds of data structures can be recognised, including
various kinds of lists, red-black trees, B-trees, etc. The main purposes of the
recognition include reverse engineering, program understanding, and profiling.
Nevertheless, these approaches do not strive for being so precise that the
inferred information could be used for safe, fully automatic code
replacement.

There exist static analyses with similar targets as the
dynamic analyses above. Out of them, the closest to us is probably the
work~\cite{dekker:abstract}. Its authors do also target transformation of a
program with low-level operations into high-level ones. However, their aim is
program understanding (design recovery), not generation of an equivalent
``executable'' program. Indeed, the result does not even have to be a program,
it can be a natural language description. Heap operations are recognised on a
purely syntactical level, using a graph representation of the program on which
predefined rewriting rules are applied.



Our work is also related to the entire field of shape analysis, which provides
the input for our method. Due to a lack of space, we cannot give a comprehensive
overview here (see, e.g., \cite{Predator2013,habermehl:forest,berdine:slayer}
for references). Nevertheless, let us note that there is a line of works using
separation-logic-based shape analysis for recognition of concurrently executable
actions (e.g., \cite{raza:parallelization,vafeiadis:RGSep}). However,
recognizing such actions is a~different task than recognizing low-level
implementation of high-level container usage.

In summary, to the best of our knowledge, our work is the first one which
targets automatic replacement of a low-level, pointer-manipulating
code by a high-level one, with guarantees of preserving the semantics.


\section{Symbolic Memory Graphs with Containers}\label{sec:SMGs}


We now present an abstract domain of \emph{symbolic memory graphs} (SMGs),
originally introduced in \cite{Predator2013}, which we use for describing shape
invariants of the programs being processed.
SMGs are a~graph-based formalism corresponding to a~fragment of separation logic
capable of describing classes of heaps with linked lists.
We present their simplified version restricted to dealing with doubly-linked
lists, sufficient for formalising the main concepts of our method.
Hence, nodes of our SMGs represent either concrete memory \emph{regions}
allocated by a~single allocation statement or \emph{doubly-linked list segments}
(DLSs).
DLSs arise by abstraction and represent sets of doubly-linked sequences of
regions of an arbitrary length.
Edges of SMGs represent pointer links.

In \cite{Predator2013}, SMGs are used to implement a \emph{shape analysis}
within the generic framework of \emph{abstract interpretation}
\cite{Cousot:AbstrIntrepr:77}.
We use the output of this shape analysis, extended with a~\emph{transformation
relation}, which provides us with precise information about the dynamics of the
memory changes, as a part of the input of our method (cf.
Section~\ref{sec:ACFG}).
%
%
Further, in Section~\ref{sec:SpecDestrContOp}, we propose a way how SMGs
together with a~transformation relation can be used to specify the container
operations to be recognized. 

%
Before proceeding, we recall that our use of SMGs can be changed for other
domains common in the area of shape analysis (as mentioned already in the
introduction and further discussed in Section~\ref{sec:discussion}).

\paragraph{Symbolic memory graphs.}

We use $\Undef$ to explicitly denote undefined values of functions.
We call a~\emph{region} any block of memory allocated as a whole (e.g., using a
single \texttt{malloc()} statement), and we denote by $\Null$ the special
\emph{null region}.
For a~set $A$, we use $A_\Null$, $A_\Undef$, and $A_{\Null,\Undef}$ to denote
the sets $A \cup \{ \Null \}$, $A \cup \{ \Undef \}$, and $A \cup \{ \Null,
\Undef \}$, respectively.
Values stored in regions can be accessed through~\emph{selectors} (such as
$\next$ or $\prev$).
To simplify the presentation, we assume dealing with \emph{pointer} and
\emph{integer} values only.

For the rest of the paper, we fix sets of pointer selectors~$\selp$, data
selectors $\seld$, regions $\regions$, pointer variables $\varp$, and container
variables $\varc$ (container variables do not appear in the input programs, they
get introduced by our transformation procedure to denote parts of memory which
the program treats as containers).
We assume all these sets to be pairwise disjoint and disjoint with
$\integers_{\Null, \Undef}$.
We use $\variables$ to denote the set  $\varp\cup\varc$ of all variables, and
$\selectors$ to denote the set $\selp \cup \seld$ of all selectors.

A \emph{doubly-linked list segment} (DLS) is a~pair
$(\front,\last)\in\regions\times\regions$ of regions that abstracts
a~doubly-linked sequence of regions of an arbitrary length that is uninterrupted
by any external pointer pointing into the middle of the sequence and
interconnects the front region represented by $\front$ with the back region
%
%
$\last$.
We use $\dlss$ to denote the set of all DLSs and assume that $\regions\cap\dlss
= \emptyset$. Both regions and DLSs will be called \emph{objects}.


To illustrate the above, the top-left part of Fig.~\ref{fig:abstraction}
shows a~memory layout with five regions (black circles), four of which form a
\texttt{NULL}-terminated DLL.
\mbox{The bottom-left part}

\begin{wrapfigure}[6]{r}{60mm}
  \vspace*{-8.5mm}
  \includegraphics[width=58mm]{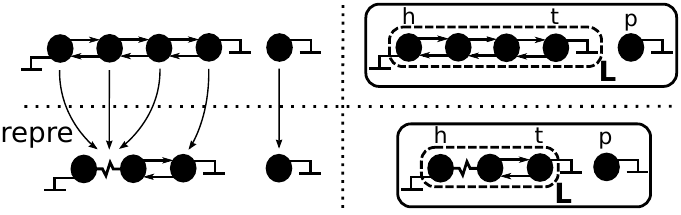}
  \vspace*{-2.5mm}
  \caption{A DLL and an SDLL, a PC and an SPC.}
  \label{fig:abstraction}
\end{wrapfigure}
\noindent of Fig.~\ref{fig:abstraction} shows a~sequence of three doubly-linked
regions abstracted into a~DLS (depicted as a pair of regions linked via the
``spring'' \spring). Note that we could also abstract all four doubly-linked
regions into a single DLS.


%

We can now define a~\emph{symbolic memory graph} (SMG) formally.
It is a~triple $\smg=(R,D,\selmap)$ consisting of a set $R\subseteq \regions$ of
regions, a set $D\subseteq R\times R\subseteq\dlss$ of DLSs, and a~map $\selmap$
defining the pointer and data fields of regions in $R$.
It assigns to every pointer selector $\sel_p\in\selp$ a function
$\selmap(\sel_p):R\rightarrow R_{\Null,\Undef}$ that defines the successors of
every region $\region\in R$.
Further, it assigns to every data selector $\sel_d\in\seld$ a function
$\selmap(\sel_d):R\rightarrow \integers_{\Undef}$ that defines the data values
of every region $\region\in R$.
We will sometimes abuse the notation and write simply $\sel(r)$ to denote
$\selmap(\sel)(r)$.
%
%
%
An SMG ${\smg'} = (R', D',\selmap')$ is a~\emph{sub-SMG} of $\smg$, denoted
$\smg'\substruct\smg$, if $R'\subseteq R$, $D'\subseteq D$, and
$\selmap'(\sel)\subseteq\selmap(\sel)$ for all $\sel\in\selectors$.



\paragraph{Container shapes.}

We now proceed to defining a notion of container shapes that we will be looking
for in shape invariants produced by shape analysis and whose manipulation
through given container operations we will be trying to recognise.
For simplicity, we restrict ourselves to \texttt{NULL}-terminated DLLs.
However, in our experimental section, we present results for some other kinds of
list-shaped containers too.
Moreover, at the end of the paper, we argue that a further generalization of our
approach is possible.
Namely, we argue that the approach can work with other types of container shapes
as well as on top of other shape domains.


A \emph{symbolic doubly-linked list (SDLL)} with a~\emph{front region} $\front$
and a \emph{back region} $\last$ is an SMG in the form of a sequence of regions
possibly interleaved with DLSs, interconnected so that it represents a DLL.
Formally, it is an SMG $\smg=(R,D,\selmap)$ where $R = \{r_1, \ldots, r_n\}$, $n
\geq 1$, $r_1 = \front$, $r_n = \last$, and for each $1\leq i < n$, either
$\valof\next{r_i} = {r_{i+1}}$ and $\valof\prev{r_{i+1}} = r_i$, or
$(r_i,r_{i+1})\in D$ and $\valof\next{r_i} = \valof\prev{r_{i+1}} = \Undef$.
%
%
An SDLL which is \texttt{NULL}-terminated, i.e., with $\valof\prev\front =
\Null$ and $\valof\next\last = \Null$, is called a~\emph{container shape} (CS).
%
%
%
We write $\shapes{\smg}$ to denote the set of all CSs $G'$ that are
sub-SMGs of an SMG $G$.
The bottom-right part of Fig.~\ref{fig:abstraction} contains an SDLL connecting
a DLS and a region.
It is \texttt{NULL}-terminated, hence a CS, which is indicated by the dashed
\mbox{envelope \includegraphics[width=2.6mm]{CS}.}
%

\paragraph{Symbolic program configurations.}

A \emph{symbolic program configuration} (SPC) is a pair $(\smg,\val)$ where
$\smg = (R,D,\selmap)$ is an SMG and $\val: (\varp \rightarrow R_{\Null,\Undef})
\cup (\varc \rightarrow \shapes{\smg}_\Undef)$ is a~\emph{valuation} of the
variables. 
An SPC $\spc' = (\smg',\val')$ is a \emph{sub-SPC} of an SPC $\spc =
(\smg,\val)$, denoted $\spc'\substruct\spc$, if $\smg'\substruct\smg$ and
$\val\subseteq\val'$.
The bottom-right part of Fig.~\ref{fig:abstraction} depicts an SPC with pointer
variables \texttt{h} and \texttt{t} 
positioned next to the regions $\val(\texttt{h}) $ and $ \val(\texttt{t})$ they
evaluate to.
The figure further shows a variable \texttt{L}
positioned next to the CS $\val(\texttt{L})$ it evaluates to. 
The top-right part of Fig.~\ref{fig:abstraction} is a PC as it has no DLSs.
Examples of other SPCs are shown in the annotations of program locations in
Fig.~\ref{fig:allinone}{\figcfgfirst} (and also Fig.~\ref{fig:exCFGiteration-2}
in Appendix~\ref{app:nondestructive_full}). 

\paragraph{Additional notation.}

For an SMG or an SPC $X$, we write $\regionsof X$ to denote the set of its
regions, and $\objectsof X$ to denote the set of all its objects (regions and
DLSs).
A (concrete) \emph{memory graph (MG)}, \emph{program configuration (PC)}, or
\emph{doubly-linked list (DLL)} is an SMG, SPC, or DLL, respectively, whose set
of DLSs is empty, i.e., no abstraction is involved.
A bijection $\rsubst:\regions\rightarrow\regions$ is called a \emph{region
renaming}.
For an SMG, SPC, or a~variable valuation $x$, we denote by $\rsubst(x)$ the
structure arising from $x$ by replacing each occurrence of \mbox{$r\in\regions$}
by $\rsubst(r)$. 
A bijection $\vsubst:\variables\rightarrow\variables$ is called a~\emph{variable
renaming}, and we define $\vsubst(x)$ analogous to $\rsubst(x)$.

\paragraph{Abstraction and concretization.} 

We now formalize the standard pair of abstraction and concretization functions
used in abstract interpretation for our domains of MGs and SMGs.
An SMG $\smg$ is an \emph{abstraction} of an MG $\mg$ iff it can be obtained via
the following
%
%
three steps: \begin{inparaenum}[(i)]
\item\label{abs:rename} Renaming regions of $\mg$ by some region renaming
$\rsubst$ (making the semantics of an SMG closed under renaming).
\item\label{abs:remove} Removing some regions (which effectively removes some
constraint on a part of the memory, thus making its representation more
abstract).
%
%
\item\label{abs:fold} Folding some DLLs into DLSs (abstracting away some details
of the internal structure of the DLLs).
\end{inparaenum}
In particular, a~DLL $\dll$ with a~front region $r$ and a back region $\last$
may be \emph{folded} into a DLS $\dls_\dll = (\front,\last)$ by removing the
inner regions of $\dll$ (we say that these regions get folded into $\dls_\dll$),
removing the $\next$-value of $r$ and the $\prev$-value of $r'$ (unless $r =
r'$), and adding $\dls_\dll$ into the set $D_\smg $ of DLSs of $\smg$. 

Now, let $g$ be a component of a PC $(g,\val)$.
The PC may be abstracted into an SPC $(\smg,\val')$ by\begin{inparaenum}[(a)]
  \item forgetting values of some variables $x\in\domof\val$, i.e., setting
  $\val'(x)$ to $\Undef$, and
  \item abstracting $g$ into $\smg$ by Steps (\ref{abs:rename}--\ref{abs:fold})
  above. Here, Step (\ref{abs:rename}) is augmented by redirecting every
  $\val'(x)$ to $\rsubst(\val(x))$, Step (\ref{abs:remove}) is allowed to remove
  only regions that are neither in $\dom(\val')$ nor in any CS that is in
  $\dom(\val')$, and Step (\ref{abs:fold}) may fold a DLL into a DLS only if
  none of its inner regions is in $\dom(\val')$, redirecting values of container
  variables from the original CSs to the ones that arise from it by folding.
\end{inparaenum}

The \emph{concretization} of an SPC (SMG) $X$ is then the set $\semof X$ of all
PCs (MGs, resp.) that can be abstracted to $X$.
When $\mathbb X$ is a~set of SPCs (SMGs), then $\semof{\mathbb X} = \bigcup_{ X
\in \mathbb X} \semof X$.

The left part of Fig.~\ref{fig:abstraction} shows an abstraction of an MG (top)
into an SMG (bottom).
Step (\ref{abs:rename}) renames all regions in the MG into the regions of the
SMG, Step (\ref{abs:remove}) is not applied, and Step (\ref{abs:fold})
folds three left-most regions of the DLL into a DLS.
The $\regabs$ arrows show the so-called
assignment of representing objects defined below.

\section{Operations and their Specification}\label{sec:SpecDestrContOp}

In this section, we introduce a notion of \emph{operations} and propose their
finite encoding in the form of the so-called \emph{symbolic operations}.
Symbolic operations play a crucial role in our approach since they are used to
describe both of the two inputs of our algorithm for recognition of high-level
container operations in low-level code.
In particular, on one hand, we assume a (slightly extended) shape analyser to
provide us with a~CFG of the program being processed annotated by symbolic
operations characterizing the effect of the low-level pointer statements used in
the program (as discussed in Section~\ref{sec:ACFG}).
On the other hand, we assume the high-level container operations whose
effect---implemented by sequences of low-level pointer statements---is to be
sought along the annotated CFG to be also described as symbolic operations.
This can either be done by the users of the approach (as discussed at the end of
this section), or a library of typical high-level container operations can be
pre-prepared.

Below, we, in particular, concentrate on destructive container operations, i.e.,
those container operations which change the shape of the heap.
Non-destructive container operations are much easier to handle, and we discuss
them at the end of Section~\ref{section:destructive}.

\paragraph{Operations and Symbolic Operations.}

We define an \emph{operation} as a binary relation~$\op$~on PCs capturing which
input configurations are changed to which output configurations by executing the
operation. 
The individual pairs $\update = \updateof \conf {\conf'}\in\op$ relating one
input and one output configuration are called \emph{updates}. 
Operations corresponding to pointer statements or container operations relate
infinitely many different input and output configurations, 
hence they must be represented symbolically.
We therefore define a \emph{symbolic update} as a triple $\supdate = \supdateof
\sconf \becomes {\sconf'}$ where $\sconf = (\smg,\val)$, $\sconf' =
(\smg',\val')$ are SPCs, and ${\becomes}$ is a~binary relation over objects
(regions and DLSs) called \emph{transformation relation}. 
%
%
A~\emph{symbolic operation} is then simply a (finite) set $\sop$ of symbolic updates.

Symbolic updates will be used to search for implementation of destructive
container operations based on changes of the SMGs labelling the given CFG.
To be able to do this with enough precision, symbolic updates must describe the
``destructive'' effect that the operation has on the shape of the memory
(addition/removal of a region or a~change of a selector value).
For this, we require the semantics of a symbolic update to be
\emph{transparent}, meaning that every destructive change 
caused by the operation is \emph{explicitly} and \emph{unambiguously} visible in
the specification of the operation (i.e., it cannot, e.g., happen in an
invisible way somewhere inside a DLS).
%
%
On the other hand, we are not interested in how the code modifies data values of
regions. 
The semantics of a symbolic update thus admits their arbitrary changes.

\paragraph{Semantics of symbolic updates.}

To define semantics of symbolic operations, we need to distinguish abstract object (region or DLS) of an SPC $\spc = (\smg,\val)$ 
representing a region $\region$ of a PC $\conf=(\mg,\val')\in\semof\spc$.
Recall that $\smg$ arises by abstracting $\mg$ by
Steps (\ref{abs:rename}--\ref{abs:fold}). 
Let $\rsubst$ be the region renaming used in Step
(\ref{abs:rename}). 
We define the \emph{representing object} $\repreof\region$ of $\region$ in $\spc$ as%
\begin{inparaenum}[(1)]
\item
	the region $\rsubst(r)$ if $\rsubst(r)\in\regionsof\smg$,
\item
	$\Undef$ if $\rsubst(r)$ is removed in $\smg$ by Step (\ref{abs:remove}), and
\item
	the DLS $\dls$ if $\rsubst(r)$ is folded into $\dls\in\objectsof\smg$ in Step~(\ref{abs:fold}).
\end{inparaenum}
We use $c\in_\repre\semof\spc$ to denote that the function $\repre$ is an \emph{assignment of representing objects} of $\spc$ to regions of $\pc$.
The inverse $\repre^{-1}(\object)$ gives the set of all regions of $\pc$ that are represented by the object $\object\in\objectsof\spc$. 
%
Notice that the way of how $\mg$ is abstracted to $\smg$ by Steps (\ref{abs:rename}--\ref{abs:fold}) is not necessarily unique, hence the assignment $\repre$ is not unique either. 
The right part of Fig.~\ref{fig:abstraction} shows an example of abstraction of
a PC $\conf$ (top) to an SPC $\spc$ (bottom), with the assignment of
representing objects $\repre$ shown via the top-down arrows.

Using this notation, the semantics of a symbolic update $\supdate = \supdateof \sconf
\becomes {\sconf'}$ can be defined as the operation $\semof \supdate$ which contains all updates
$\update  = \updateof \conf {\conf'}$ such that:
\begin{enumerate}
	\item $\conf\in_\repre\semof\spc$ and $\conf'\in_{\repre'}\semof{\spc'}$. 

  \item \label{sem:transforms} 
  An object $o\in\objectsof\spc$ transforms into an object $o'\in\objectsof{\spc'}$, i.e., 
  $\object \becomes \object'$, iff the denotations $\regabs^{-1}(o)$ and
  $(\regabs')^{-1}(o)$ share some concrete region, i.e., $\exists \region \in
  \regionsof \conf \cap \regionsof {\conf'}:~ \regabs(\region) = \object ~\wedge~
  \regabs'(\region) = \object'$.

  \item \label{sem:transparency} The semantics is transparent:
  \begin{inparaenum}[(i)]
    \item each selector change is explicit, i.e.,  if $\sel(r)$ in $c$ differs
    from $\sel(r)$ in $c'$ for a region $r\in\regionsof c\cap\regionsof {c'}$,
    then $\regabs(r)\in\regionsof \spc$ and $\regabs'(r)\in\regionsof {\spc'}$ are
    regions such that $\sel(\regabs(r))$ in $\spc$ differs from $\sel(\regabs'(r))$
    in $\spc'$;
    \item every deallocation is explicit meaning that if a~region $r$ of $\conf$
    is removed (i.e., it is not a region of $\conf'$), then $\regabs(r)$ is a
    region (not a DLS) of $\spc$;
    \item every allocation is explicit meaning that if a region $r$ of $\conf'$
    is added (i.e., it is not a region of $\conf$), then $\regabs'(r)$ is a
    region of $\spc'$.
  \end{inparaenum}

\end{enumerate}
The semantics of a symbolic operation $\sop$ is naturally defined as
$\semof\Delta = \bigcup_{\supdate\in\sop}\semof\supdate$.

An example of a symbolic update is shown e.g. in Fig.~\ref{fig:allinone}{\figcfgfirst}, on the right of the program edge between locations
3 and 4. It consists of the right-most SPCs
attached to these locations (denote them as $ \sconf = (\smg,\val) $ and $ \sconf' = (\smg',\val') $, and
their DLSs as $\dls$ and $\dls'$, respectively) and the transformation relation between their objects denoted by the dotted vertical lines.
The allocation done between the considered locations does not touch the DLSs, it
only adds a new region pointed to by $p$.
This is precisely expressed by the symbolic update $\supdate = \supdateof \sconf \becomes
{\sconf'}$ where ${\becomes} = {\{
(\val(\texttt{L}),\val'(\texttt{L})) \}} $. 
The relation $\becomes$ (the dotted line between objects
$\val(\texttt{L})$ and $\val'(\texttt{L})$)
says that, for every update from $\semof \supdate$, denotations of the two
DLSs $\dls$ and $\dls'$ share regions (by Point \ref{sem:transforms} above).
By Point \ref{sem:transparency}, there are no differences in pointer links
between the DLLs encoded by the two DLSs; the DLLs encoded by $\dls$ and
the ones encoded by $\dls'$ must be identical up to values of data fields.
The only destructive change that appears in the update is the addition of the
freshly allocated region $ \val'(\texttt{p}) $ that does not have
a~$\becomes$-predecessor (due to Point~\ref{sem:transparency}(iii)~above).


\paragraph{User Specification of Destructive Container Operations.}

As stated already above, we first concentrate on searching for implementations
of user-specified destructive container operations in low-level code.
In particular, we consider \emph{non-iterative} container operations, i.e.,
those that can be implemented as non-looping sequences of destructive pointer
updates, region allocations, and/or de-allocations.\footnote{Hence, e.g., an
implementation of a procedure inserting an element into a sorted list, which
includes the search for the element, will not be understood as a single
destructive container operation, but rather as a procedure that calls a
container iterator in a loop until the right place for the inserted element is
found, and then calls a destructive container operation that inserts the given
region at a position passed to it as a parameter. 
}
%

We require the considered destructive container operations to be operations
$\opof\opcode$ that satisfy the following requirements: 
(1)~Each $\opof\opcode$ is deterministic, i.e., it is a~function.
(2)~The sets $\overline{x} = x_1,...,x_n \in \variables^n$ and $\overline{y} =
y_1,...,y_m \in \variables^m$, $n,m \geq 0$, are the input and output parameters
of the operation so that for every update $((\mg,\val),(\mg',\val'))\in
\opof\opcode$, the input PC has $\domof\val = \{x_1,\ldots,x_n\}$ and the output
PC has $\domof{\val'} = \{y_1,\ldots,y_m\}$.
(3)~Since we concentrate on destructive operations only, the operation does not
modify data values, i.e., $\opof\opcode\subseteq\opdataconst$ where
$\opdataconst$ contains all updates that do not change data values except for
creating an unconstrained data value or destroying a data value when creating or
destroying some region, respectively.

Container operations $\opof\opcode$ of the above form can be specified by
a user as symbolic operations, i.e., sets of symbolic updates, $\sopof\opcode$
such that $\opof\opcode = \semof{\sopof\opcode}\cap\opdataconst$.
Once constructed, such symbolic operations can form a reusable library.



For instance, the operation $\opof{\code{z=push\_back(x,y)}}$ can be specified
as a symbolic operation $\sop_{\code{z=push\_back(x,y)}}$ which inputs a CS
referred to by variable $x$ and a region pointed to by $y$ and outputs a CS
referred by $z$.
This symbolic operation is depicted in Fig.~\ref{fig:userPushBack}.
It consists of two symbolic updates in which the user relates possible initial
and final states of the memory. 
The left one specifies the case when the input container is empty, the right one the case when it is nonempty. 
%
%

%
%

\section{Annotated Control Flow Graphs}\label{sec:ACFG}


In this section, we describe the semantic annotations of a control-flow graph
that our procedure for recognizing implementation of high-level container
operations in low-level code operates on.

\paragraph{Control-flow graph.}

A \emph{control flow graph} (CFG) is a tuple $\cfg =
(\locs,\edges,\initialloc,\finalloc)$ where $\locs$ is a finite set of (control)
locations, $\edges\subseteq \locs\times \stmts \times \locs$ is a set of
\emph{edges} labelled by \emph{statements} from the set $\stmts$ defined below,
$\initialloc$ is the \emph{initial location}, and $\finalloc$ the~\emph{final
location}. 
For simplicity, we assume that any two locations $\loc,\loc'$ are connected by
at most one edge $\edgeof \loc \stmt {\loc'}$.

The set of statements consists of pointer statements $\stmtp \in \stmtsp$,
integer data statements $\stmti \in \stmtsi$, container statements
$\stmtc \in \stmtsc$, and the skip statement $\stmtskip$, i.e., $\stmts =
\stmts_p \cup \stmtsi \cup \stmtsc \cup \{ \stmtskip \}$.  
The container statements and the skip statement do not appear in the input
programs, they are generated by our transformation procedure.
The statements from $\stmts$ are generated by the following grammar (we present
a simplified minimalistic form to ease the presentation):
%
%
%
\begin{align*}
	\stmtp \mathrel{::=}\ & \mathit{p}\,\code{=}\,(\mathit{p} \mid \mathit{p}
\codesel \mathit{s} \mid \code{malloc()} \mid \Null) \mid \mathit{p} \codesel
\mathit{s}\,\code{=}\,\mathit{p} \mid \code{free($\mathit{p}$)} \mid
\mathit{p}\,\code{==}\,(\mathit{p} \mid \Null) \mid
\mathit{p}\code{!=}\,(\mathit{p} \mid \Null) \\
  \stmti \mathrel{::=}\ & \mathit{p} \codesel
\mathit{d}\,\code{=}\,(n \mid \mathit{p} \codesel \mathit{d}) \mid \mathit{p}
\codesel \mathit{d}\,\code{==}\,\mathit{p} \codesel \mathit{d} \mid \mathit{p}
\codesel \mathit{d}\,\code{!=}\,\mathit{p} \codesel \mathit{d} ~~~~~~~~
\stmtc\mathrel{::=}\ \overline{\mathit{y}}\,\code{=}\,\opname(\overline{\mathit{x}})
\end{align*}

\noindent Above, $\mathit{p} \in \varp$, $\mathit{s} \in \selp$,
$d \in \seld$, $n\in\integers$, and $\overline{x}, \overline{y} \in
\variables^*$.

For each $\stmt \in \stmtsp \cup \stmtsi$,~let $\opof\stmt$ be the operation
encoding its standard \texttt{C} semantics.
For example, the operation $\opof{\code{x=y->next}}$ contains all updates
$\updateof {\conf} {\conf'}$ where $\conf=(\mg,\val)$ is a PC s.t. $\val(y) \neq \Undef$
and $\conf'$ is the same as $\conf$ up to the variable $x$ that is assigned the
$\next$-successor of the region pointed to by $y$.
For each considered container statement $\stmt \in \stmtsc$, the operation
$\opof\stmt$ is to be specified by the user.
Let $\path = \edge_1,\ldots,\edge_n$ where $\edge_i = \edgeof {\loc_{i-1}}
{\stmt_i} {\loc_i'}$, \mbox{$1\leq i \leq n$,} be a sequence of edges of $\cfg$.
We call $\path$ a \emph{control flow path} if $\loc_{i}' = \loc_{i}$ for each
$1\leq i < n$.
The \emph{semantics} $\semof \path$ of $\path$ is the operation
$\opof{\stmt_{n}}\circ\cdots\circ\opof{\stmt_{1}}$. 

A \emph{state} of a computation of a CFG $\cfg$ is a pair $(l,c)$ where $l$ is a
location of $\cfg$ and $c$ is a PC.
A \emph{computation} of $\cfg$ is a sequence
 of states $\computation =
(\loc_0,\conf_0),(\loc_1,\conf_1),\ldots$ of length $|\computation|\leq \infty$
where there is a (unique) edge $\edge_i = \edgeof {\loc_i} {\stmt_i}
{\loc_{i+1}} \in \edges$ such that $(\conf_i,\conf_{i+1})\in\opof{\stmt_i}$ for
each $0\leq i < |\computation|$.
The path $\edge_0,\edge_1,\ldots$ is called the \emph{control path} of
$\computation$.

\paragraph{Semantic annotations.}

A \emph{semantic annotation} of a CFG $\cfg$ consists of a~\emph{memory
invariant} $\mem$, a \emph{successor relation} $\succspc$, and a
\emph{transformation relation} $\succo$. 
The quadruple $(\cfg,\mem,\succspc,\succo)$ is then called an \emph{annotated
control-flow graph} (annotated CFG). 
A~memory invariant $\mem$ is a total map that assigns to every location $\loc$
of $\cfg$ a set $\memof\loc$ of SPCs describing (an overapproximation of) the
set of memory configurations reachable at the given location.
For simplicity, we assume that sets of regions of any two different SPCs in
$\imgof\mem$ are disjoint.
The successor relation is a binary relation on SPCs.
For an edge $\edge = \edgeof \loc \stmt {\loc'}$ and SPCs $\spc\in\memof\loc$,
$\spc'\in\memof{\loc'}$, $\spc \succspc \spc'$ indicates that PCs of
$\semof\spc$ are transformed by executing $\stmt$ into PCs of $\semof{\spc'}$.
The relation $\succo$ is a transformation relation on objects of configurations
in $\imgof\mem$ relating objects of $\spc$ with objects of its
$\succspc$-successor $\spc'$ in order to express how the memory changes by
executing the edge $\edge$.
The change is captured in the form of the symbolic operation $\sop_\edge =
\{(\spc,{\succo},\spc')\mid
(\spc,\spc')\in{\succspc}\cap\memof\loc\times\memof{\loc'}\}$.
For our analysis to be sound,
we require $\sop_\edge$ to overapproximate $\opof\stmt$ restricted to
$\semof{\memof\loc}$, i.e., $\semof{\sop_\edge} \supseteq \opof\stmt^\loc$ for
$\opof\stmt^\loc =
\{(\conf,\conf')\in\opof\stmt\mid\conf\in\semof{\mem(\loc)}\}$.

A~\emph{symbolic trace} of an annotated CFG is a possibly infinite sequence of
SPCs $\symtrace = \spc_0,\spc_1,\ldots$ provided that $\spc_i\succspc
\spc_{i+1}$ for each $0 \leq i< |\symtrace| \leq \infty$.
Given a computation $\computation = (\loc_0,\conf_0),(\loc_1,\conf_1),\ldots$ of
length $|\computation| = |\symtrace|$ such that $\conf_i\in\semof{\spc_i}$ for
$0 \leq i\leq |\comp|$, we say that $\symtrace$ is a~\emph{symbolic trace of
computation $\comp$}.


A part of the annotated CFG of our running example from
Fig.~\ref{fig:allinone}{\figincode} is given in
Fig.~\ref{fig:allinone}{\figcfgfirst}, another part can be found in
Fig.~\ref{fig:exCFGiteration-2} in Appendix~\ref{app:nondestructive_full}. 
For each location $ \loc $, the set $ \mem(l) $ of SPCs is depicted
on the right of the location $ l $.
%
The relation $\succo$ is depicted by dotted lines between objects of SPCs
attached to adjacent program locations. 
The relation $\succspc$ is not shown as it can be almost completely inferred
from $\succo$:
Whenever objects of two SPCs are related by $\succo$, the SPCs are related by
$\succspc$. 
The only exception is the $\succspc$-chain of the left-most SPCs along the
control path 1, 2, 3, 4 in Fig.~\ref{fig:allinone}{\figcfgfirst}.


\section{Replacement of Low-Level Manipulation of Containers}\label{section:destructive}

With all the notions designed above, we are now ready to state our
methods for identifying low-level implementations of container operations in an
annotated CFG and for replacing them by calls of high-level container operations.
Apart from the very end of the section, we concentrate on destructive container
operations whose treatment turns out to be significantly more complex.
We assume that the destructive container operations to be sought and replaced
are specified as sequences of destructive pointer updates, region allocations,
and/or de-allocations as discussed in the last paragraph of
Sect.~\ref{sec:SpecDestrContOp}.
%

Given a specification of destructive container operations and an annotated CFG, our
algorithm needs to decide:
(1) which low-level pointer operations to remove,
(2) where to insert calls of container operations that replace them and what
are these operations, and
(3) where and how to assign the right values to the input parameters of the
inserted container operations.
To do this, the algorithm performs the following steps.

The algorithm starts by identifying container shapes in the SPCs of the given
annotated CFG.
Subsequently, it looks for the so-called \emph{transformation chains} of these
container shapes which capture their evolution along the annotated CFG.
Each such chain is a sequence of sub-SMGs that appear in the labels of a path of
the given annotated CFG.
In particular, transformation chains consisting of objects linked by the
transformation relation, meaning that the chain represents evolution of the same
piece of memory, and corresponding to some of the specified container operations
are sought.

The algorithm then builds a so-called \emph{replacement recipe} of a consistent
set of transformation chains that interprets the same low-level code as the same
high-level container operation for each possible run of the code.
The recipe determines which code can be replaced by which container operation
and where exactly the container operation is to be inserted within the sequence
of low-level statements implementing it.
This sequence can, moreover, be interleaved with some independent statements
that are to be preserved and put before or after the inserted call of a
container operation.

The remaining step is then to find out how and where to assign the right values
of the input parameters of the inserted container operations.
We do this by computing a~so-called \emph{parameter assignment} relation.
We now describe the above steps in detail. 
For the rest of Sect.~\ref{section:destructive}, we fix an input annotated CFG
$\cfg$ and assume that we have specified a symbolic operation $\sop_{\stmt}$
for every container statement $\stmt\in\stmtsc$.

\subsection{Transformation Chains} \label{section:chains}

A transformation chain is a sequence of sub-SMGs that describes how
a piece of memory evolves along a control path.
We in particular look for such transformation chains whose overall effect corresponds to the
effect of some specified container operation.
Such transformation chains serve us as candidates for code replacement.

Let $\path = \edgeof {\loc_{0}} {\stmt_1} {\loc_1}, \ldots, \edgeof {\loc_{n-1}}
{\stmt_n} {\loc_n}$ be a control flow path.
A \emph{transformation chain (or simply \emph{chain})} with the control path $\path$  is  a sequence
$\chain = \chainof 0 \cdots \chainof n$ of SMGs such that, for each $0\leq i
\leq n$, there is an SPC $\spc_{i} = (\smg_i,\val_i)\in\mem(\loc_i)$ with
$\chainof i \substruct \smg_i$ and the relation $\succspcof\chain =
\{(\spc_{i-1},\spc_{i})\mid 1\leq i\leq n\}$ is a subset of $\succspc$, i.e.,
$\spc_i$ is the successor of $\spc_{i-1}$ for each $i$.
We will call the sequence $\spc_0,\ldots,\spc_n$ the \emph{symbolic trace of
$\chain$, and 
we let $\relof\chain^i = {\succo} \cap (\objectsof{\chainof {i-1}}\times
\objectsof{\chainof i})$ for $1\leq i \leq n$ denote the
 transformation relation between the objects
of the $i-1$th and $i$th SMG of $\chain$.}
%

%

An example of a chain, denoted as $\chainex$ below, is the
sequence of the six SMGs that are a part of the SPCs highlighted in grey in
Fig.~\ref{fig:allinone}{\figcfgfirst}.
The relation $\succspcof{\chainex}$ links
 the six SPCs, and
the relation $\relof{\chainex}$ consists of the pairs of objects connected by
the dotted lines.


Let $\sop$ be a specification of a container operation.
We say that a transformation chain $\chain$ \emph{implements} $\sop$ w.r.t.  some input/output
parameter valuations $\val$/$\val'$ iff $ \semof{ \supdate_\chain }
\subseteq\semof{\sop}$ for the symbolic update $ \supdate_\chain = \supdateof
{(\chainof 0,\val)} {\relof\chain^n\circ\cdots\circ\relof\chain^1} {(\chainof{n},\val')} $.
Intuitively, $\supdate_\chain$ describes how MGs in $\semof{\chainof 0}$ are
transformed into MGs in $\semof{\chainof n}$ along the chain.
When put together with the parameter valuations $\val$/$\val'$,
$\supdate_\chain$ is required to be covered by $\sop$.

%

In our example, by taking the composition of relations 
$\relof\chainex^5\circ\cdots\circ\relof\chainex^1$
(relating objects from location 4 linked by dotted lines with objects at
location 11), we see that the chain $\chainex$
 implements the symbolic operation
$\sop_{\code{z=push\_back(x,y)}}$ from Fig~\ref{fig:userPushBack},
namely, its symbolic update on the right.
The parameter valuations $\val$/$\val'$ can be constructed as
\texttt{L} and \texttt{p} correspond to \texttt{x} and \texttt{y}
at location 4, respectively, and \texttt{L} corresponds to \texttt{z}~at~location~11.

Let $\chain$ be a chain implementing $\sop$ w.r.t. input/output parameter
valuations $\val$/$\val'$.
We define \emph{implementing edges} of $\chain$ w.r.t. $\sop$, $\val$, and
$\val'$ as the edges of the path $\path$ of $\chain$ that are labelled by those
destructive pointer updates, region allocations, and/or deallocations that
implement the update $\supdate_\chain$.
Formally, the $i$-th edge $\edge_i$ of $\path$, $1\leq i \leq n$, is an
implementing edge of $\chain$ iff
$\semof{((\chainof{i-1},\emptyset),\succo,(\chainof{i},\emptyset))}\cap\opdataconst$ is not an 
identity (the update does not talk about values of variables, hence the empty valuations).

%

For our example chain $\chainex$, the edges (7,8), (8,9), and (10,11) are
implementing.

\paragraph{Finding transformation chains in an annotated CFG.}\label{sec:findingTCs}

Let $\sop_{\stmt}$ be one of the given symbolic specifications of the semantics of a destructive container statement $\stmt\in\stmtsc$.
We now sketch our algorithm for identifying chains that implement 
$\sop_{\stmt}$.
More details can be found in Appendix~\ref{appendix:findingTCs}.
The algorithm is based on pre-computing sets $\widehat{\supdate}$ of so-called
\emph{atomic symbolic updates} that must be performed to implement the effect of
each symbolic update $\supdate\in\sop_{\stmt}$.
Each atomic symbolic update corresponds to one pointer statement that performs a
destructive pointer update, a~memory allocation, or a~deallocation.
The set $\widehat{\supdate}$ can be computed by looking at the
differences in the selector values of the input and output SPCs of $\supdate$.
The algorithm then searches through symbolic traces of the annotated CFG $\cfg$ and looks
for sequences of sub-SMGs present in them and linked by the atomic symbolic
updates from~$\widehat{\supdate}$ (in any permutation) or by identity (meaning
that a statement irrelevant for  $\stmt$ is performed). 
Occurrences of atomic updates are found based on testing entailment
between symbolic atomic updates and symbolic updates annotating subsequent CFG
locations.
This amounts to checking entailment of the two source and the two target SMGs of
the updates using methods of~\cite{Predator2013}, augmented with testing that
the transformation relation is respected.
Soundness of the procedure depends on the semantics of symbolic updates being
sufficiently precise, which is achieved by transparency of their semantics. 

For example, for the container statement $\code{z=push\_back(x,y)}$ and the
symbolic update $\supdate$ corresponding to an insertion into a list of length
one or more, $\widehat{\supdate}$ will consist of (i)~symbolic updates
corresponding to the pointer statements assigning $\code{y}$ to the
$\next$-selector of the back region of $\code{x}$, (ii)~assigning the back
region of $\code{x}$ to the $\prev$-selector of~$\code{y}$, and (iii)~assigning
$\Null$ to the $\next$-selector of $\code{y}$.
Namely, for the chain $\chainex$ in Fig.~\ref{fig:allinone}(c) and the
definition of the operation $\sop_{\code{z=push\_back(x,y)}}$ in
Fig.~\ref{fig:userPushBack}, the set $\widehat{\supdate}$ consists of three
symbolic updates: from location 7 to 8 by performing Point (i), then
from location 8~to~9 by performing (iii), and from location 10 to 11 by
performing (ii).


\subsection{Replacement Locations}\label{section:repl_loc}

A \emph{replacement location} of a transformation chain $\chain$ w.r.t. $\sop$, $\val$, and
$\val'$ is one of the locations on the control path $\path$ of $\chain$ where it
is possible to insert a call of a procedure implementing $\sop$ while preserving
the semantics of the path.
In order to formalize the notion of replacement locations, we call the edges of
$\path$ that are not implementing (do not implement the operation---e.g., they modify data) and precede or succeed the replacement
location as the \emph{prefix} or \emph{suffix edges}, and we denote
$p_{\mathtt{p}/\mathtt{s}/\mathtt{i}}$ the sequences of edges obtained by
removing all but prefix/suffix/implementing edges, respectively.
The replacement location must then satisfy that $\initrestrof
{\semof{\memof{\loc_0}}} {\semof{\prefix\cdot\impl\cdot \suffix}} = \initrestrof
{\semof{\memof{\loc_0}}} {\semof\path}$ 
where the notation $\initrestrof S \op$ stands for the operation $\op$ restricted to updates with the source configurations from the set $S$.
%
%
The prefix edges are chosen as those which read the state of the container shape as it would be before the identified container operation, the suffix edges as those which read its state after the operation. The rest of not implementing edges is split  arbitrarily.
If we do not find a splitting satisfying the above semantical condition, $\chain$ is discarded from further processing.

For our example chain $\chainex$, the edges (4,7) and (9,10) can both be put into
the prefix since none of them saves values of pointers used in the operation (see Fig.~\ref{fig:allinone}(c)).
The edge (9,10) is thus
shifted up in the CFG, and the suffix
remains empty.
Locations 8--11 can then be used as the replacement locations.

\subsection{Replacement Recipes}\label{section:repl_rec}

A \emph{replacement recipe} is a map $\rete$ that assigns to each chain $\chain$
of the annotated CFG $\cfg$ a~quadruple $\reteof\chain =
(\sopof{\chain},\inof{\chain},\outof{\chain},\rlof{\chain})$, called a
\emph{replacement template}, with the following meaning:
$\sopof{\chain}$ is a~specification of a container operation that is to be
inserted at the replacement location $\rlof\chain$ as a replacement of the
implementing edges of $\chain$. 
Next, $\inof{\chain}$/$\outof{\chain}$ are input/output parameter valuations
that specify which parts of the memory should be passed to the inserted
operation as its input parameters and which parts of the memory correspond to
the values of the output parameters that the operation should return.

%

For our example  chain $\chainex$, a replacement template $\reteof{\chainex}$
can be obtained, e.g., by taking $\sopof{\chainex} =
\sop_{\code{z=push\_back(x,y)}}$, $\rlof{\chainex}=11$,
$\inof{\chainex}(\texttt{x}) = \val_{\chainex[0]}(\texttt{L})$ denoting the CS
in the gray SPC of loc. 4, $\inof{\chainex}(\texttt{y}) =
\val_{\chainex[0]}(\texttt{p})$ denoting the right-most region of the gray SPC
of loc. 4, and $\outof{\chainex}(\texttt{z}) =
\val_{\chainex[5]}(\texttt{L})$ denoting the CS in the gray SPC~of~loc.~11.

We now give properties of replacement recipes that are sufficient for the CFG
$\cfg'$ generated by our code replacement procedure, presented in
Sect.~\ref{section:code_repl}, to be semantically equivalent to the original
annotated CFG $\cfg$.

\paragraph{Local consistency.}

A replacement recipe $\rete$ must be \emph{locally consistent} meaning that
(i)~every $\chain\in\domof\rete$ implements $\sopof{\chain}$ w.r.t.
$\inof{\chain}$ and $\outof{\chain}$ and (ii)~$\rlof{\chain}$ is a~replacement
location of $\chain$ w.r.t.  $\sopof{\chain}$, $\inof{\chain}$, and
$\outof{\chain}$.
Further, to enforce that $\chain$ is not longer than necessary, we require its
control path $\chain$ to start and end by an implementing edge. 
Finally, implementing edges of the chain $\chain$ cannot modify
selectors of any object that is a part of a CS which is itself not at the input of the
container operation.

\paragraph{Global consistency.}

Global consistency makes it safe to replace the code w.r.t. multiple overlapping
chains of a replacement recipe $\rete$, i.e., the replacements defined by them do
not collide.
%
%
A replacement recipe $\rete$ is \emph{globally consistent} iff the following
holds:\begin{enumerate}

  \item \label{cons:rloc} A location is a replacement location within all
  symbolic traces passing it or within none. Formally, for each maximal symbolic
  trace $\symtrace$ passing the replacement location $\loc_\tau$ of a~chain $\chain
  \in \domof\rete$, there is a chain $\chain' \in \domof\rete$ s.t. $\rlof{\chain'}
  = \loc_\tau$ and the symbolic trace of $\chain'$ is a sub-sequence of
  $\symtrace$ passing $\loc_\tau$.

  \item \label{cons:implementing} An edge is an implementing edge within all
  symbolic traces passing it or within none. Formally, for each maximal symbolic
  trace $\symtrace$ passing an implementing edge $\edge$ of a chain $\chain \in
  \domof\rete$, there is a chain $\chain' \in \domof\rete$ s.t.  $\edge$ is its
  implementing edge and the symbolic trace of $\chain'$ is a sub-sequence of
  $\symtrace$ passing $\edge$.

  \item \label{cons:infix} For any chains $\chain, \chain' \in \domof\rete$ that
  appear within the same symbolic trace, the following holds: 
  \begin{inparaenum}[(a)]
  \item\label{cons:infix:infix}
  If $\chain$,
  $\chain'$ share an edge, then they share their replacement location, i.e.,
  $\loc_\tau = \loc_{\tau'}$. 
  \item\label{cons:infix:rloc}
  Moreover, if $\loc_\tau = \loc_{\tau'}$, then
  $\tau$ is an infix of  $\tau'$ or $\tau'$ is an infix~of~$\tau$.
  \end{inparaenum}
  The latter condition is technical and simplifies the proof of correctness of
  our approach.  

\item \label{cons:operation} Chains $\tau, \tau' \in \domof\rete$ with the same
  replacement location $\loc_\tau = \loc_{\tau'}$ have the same operation, i.e.,
  $\sopof\chain = \sopof{\chain'}$.

  \item \label{cons:always} An edge is either implementing for every chain of
  $\domof\rete$ going through that edge or for no chain in $\domof\rete$ at all.

\end{enumerate}
Notice that Points \ref{cons:rloc}, \ref{cons:implementing}, and
\ref{cons:infix} speak about symbolic traces. 
That is, they do not have to hold along all control paths of the given CFG $\cfg$ but only
those which appear within computations starting from
$\semof{\memof{\initialloc}}$.

\paragraph{Connectedness.}

The final requirement is connectedness of a replacement recipe $\rete$.
It reflects the fact that once some part of memory is to be viewed as a
container, then destructive operations on this part of memory are to be done by
destructive container operations only until the container is destroyed by a
container destructor.
Note that this requirement concerns operations dealing with the linking fields
only, the rest of the concerned objects can be manipulated by any low-level
operations.
Moreover, the destructive pointer statements implementing destructive container
operations can also be interleaved with other independent pointer manipulations,
which are handled as the prefix/suffix edges of the appropriate chain.

Connectedness of $\rete$ is verified over the semantic annotations by checking
that in the $\succo$-future and past of every container (where a container is
understood as a container shape that was assigned a container variable in
$\rete$), the container is created, destroyed, and its linking fields are
modified by container operations only.
A formal description can be found in Appendix~\ref{appendix:connectedness}.

\paragraph{Computing recipes.}

The algorithm for building a replacement recipe $\rete$ starts by
looking for chains $\chain$ of the annotated CFG $\cfg$ that can be associated with replacement templates $\reteof\chain =
(\sopof{\chain},\inof{\chain},\outof{\chain},\rlof{\chain})$ s.t. local
consistency holds.
It uses the approach described in Sect.~\ref{section:chains}.
It then tests global consistency of $\rete$.
All the five sub-conditions can be checked straightforwardly based on their definitions.
If $\rete$ is found not globally consistent, problematic chains are pruned it until global consistency is achieved.
Testing for connectedness is done by testing all $\succcs$-paths leading forward
from output parameters of chains and backward from input parameters of chains.
Testing whether $\semof{(\cshape,\succo,\cshape')}\cap\opdataconst$ or
$\semof{(\cshape',\succo,\cshape)}\cap\opdataconst$ is an identity,
which is a part of the procedure, 
can be done easily due to the transparency of symbolic updates.
Chains whose container parameters contradict connectedness are removed from
$\rete$.
The pruning is iterated until $\rete$ is both globally consistent and connected.


\subsection{Parameter Assignment}\label{section:param_assgn}

To prevent conflicts of names of parameters of the inserted container
operations, their calls are inserted with fresh parameter names.  Particularly,
given a~replacement recipe $\rete$, the replacement location $\rlof\chain$ of
every chain $\chain\in\domof\rete$ is assigned a variable renaming
$\vsubst_{\rlof\chain}$ that renames the input/output parameters of the symbolic
operation $\sopof\chain$, specifying the destructive container operation
implemented by $\chain$, to fresh names. 
The renamed parameters of the container operations do not appear in the original
code, and so the code replacement algorithm must insert assignments of the
appropriate values to the parameters of the operations prior to the inserted
calls of these operations.
For this, we compute a \emph{parameter assignment} relation $\assign$ containing
pairs $(\loc,x:=y)$ specifying which assignment $x:=y$ is to be inserted at
which location $\loc$. 
Intuitively, $\assign$ is constructed so that the input parameters of container
operations take their values from the output container parameters of the
preceding container operations or, in case of pointer variables, directly from
the access paths (consisting of a pointer variable $v$ or a selector value $v
\codesel s$) that are used in the original program to access the concerned
memory regions.
More details are given in Appendix~\ref{app:param_assgn}.
Let us just note that if we fail to find a~parameter assignment, we remove some
chains from $\rete$ and restart the search.

%

\subsection{Code Replacement}\label{section:code_repl}

The input of the replacement procedure is the annotated CFG
$\cfg$, a
replacement recipe~$\rete$, a variable renaming $\vsubst_\rloc$ for every
replacement location $\rloc$ of $\rete$, and a parameter assignment relation
$\assign$. The procedure produces a modified CFG $\cfg'$.
It 
first removes all implementing edges of every chain $\chain\in\domof\rete$ and adds
instead an edge with a~call to $\vsubst_\rloc(\sopof\rloc)$ at $\rlof\chain$,
and
then adds an edge 
with the assignment $x:=y$ at $\loc$ for every pair
$(\loc,x:=y)\in\assign$.
The \emph{edge removal} is done simply by replacing the statement on the given
edge by the $\stmtskip$ statement whose semantics is identity.
Given a statement $\stmt$ and a~location $\loc$, \emph{edge addition} amounts
to:
(1)~adding a fresh location $\loc^\bullet$,
(2)~adding a new edge $\edgeof \loc {\stmt} {\loc^\bullet}$,
(3)~replacing every edge $\edgeof \loc {\stmt'} {\loc'}$ by $\edgeof
{\loc^\bullet} {\stmt'}  {\loc'}$.
Intuitively, edge removal preserves all control paths going through the original
edge, only the statement is now ``skipped'', and edge addition inserts the given
statement into all control paths containing the given location.

After replacing destructive container operations, we replace non-destructive
container operations, including, in particular, usage of \emph{iterators} to
reference elements of a~list and to move along the list, \emph{initialisation of
iterators} (placing an iterator at a particular element of a list), and
\emph{emptiness tests}.
With a replacement recipe $\rete$ and an assignment relation $\assign$ at hand,
recognizing non-destructive operations in the annotated CFG $\cfg$ is a much easier task than
that of recognizing destructive operations.
Actually, for the above operations, the problem reduces to analysing annotations
of one CFG edge at a time.
We refer an interested reader to Appendix~\ref{app:nondestructive_full} 
for more details.

\paragraph{Preservation of semantics.}

It can now be proved (cf. Appendix~\ref{app:correctness}) that under the
assumption that the replacement recipe $\rete$ is locally and globally
consistent and connected and the parameter assignment relation $\assign$ is
complete, our code replacement procedure preserves the semantics.
In particular, computations of the CFG $\cfg$ are surjectively mapped to
computations of the CFG $\cfg'$ that are equivalent in the following sense. They
can be divided into the same number of segments that are in the computation of
$\cfg$ delimited by borders of the chains that it passes through.
%
%
The two computations agree on the final PCs of the respective segments.
Note also that the transformation preserves memory safety errors---if they
appear, the related containers will not be introduced due to violation of
connectedness.

\section{Implementation and Experimental Results}


We have implemented our approach as an extension of the Predator shape
analyser~\cite{Predator2013} and tested it through a number of experiments.
Our code and experiments are publicly available at 
\url{http://www.fit.vutbr.cz/research/groups/verifit/tools/predator-adt}.
%

The first part of our experiments concentrated on how our approach can deal with
various low-level implementations of list operations. We built a collection of
18 benchmark programs
manipulating \texttt{NULL}-terminated DLLs via different
implementations of typical list operations, such as insertion, iteration, and
removal.
Moreover, we generated further variants of these implementations by
considering various legal permutations of their statements. We also considered
interleaving the pointer statements implementing list operations with various
other statements, e.g., accessing the data
stored~in~the~lists.\footnote{In practice, there would typically be
many more such statements, seemingly increasing the size of the case studies,
but such statements are not an issue for our method.} Finally, we also
considered two benchmarks with \texttt{NULL}-terminated~Linux~lists that heavily
rely on pointer arithmetics.
In all the benchmarks, our tool correctly
recognised list operations among other pointer-based code and gave us a~complete
recipe for code transformation. On a standard desktop PC, the
total run time on a benchmark
was almost always under 1s (with one exception at 2.5s), with negligible memory consumption.




Next, we successfully applied our tool to multiple case studies of creating,
traversing, filtering, and searching lists taken from the benchmark suite of
Slayer \cite{berdine:slayer} (modified to use doubly-linked instead of
singly-linked lists). Using a slight extension of our prototype, we
also successfully handled examples dealing with lists with head/tail pointers as
well as with circular lists.
These examples 
illustrate that our approach can be generalized 
to other kinds of containers as discussed in Section~\ref{sec:discussion}.
These examples are also freely available at the link above.
Moreover, in
Appendix~\ref{sec:interleavedOps},
we present an
example how we deal with code where two container operations are interleaved.

Further, we concentrated on showing that our approach can
be useful to simplify program analysis by separating low-level
pointer-related analysis from analysing other, higher-level properties
(like, e.g., sortedness or other data-related properties). To~illustrate this, we used our approach to combine shape analysis implemented in
Predator with data-related analysis provided by the J2BP
analyser~\cite{j2bpURL}. J2BP analyses \texttt{Java} programs, and it is based
on predicate abstraction extended to cope with containers.

We used 4 benchmarks for the evaluation. The first one builds an ordered list of
numerical data, inserts another data element into it, and finally checks
sortedness of the resulting list, yielding an assertion failure if this is not
the case (such a test harness must be used since J2BP expects a closed program
and verifies absence of assertion failures). The other benchmarks are similar in
that they produce lists that should fulfill some property, followed by code that
checks whether the property is satisfied. The considered properties are
correctness of the length of a list, the fact that certain inserted values
appear in a certain order, and correctness of rewriting certain values in a
list. We used our tool to process the original \texttt{C} code. Next, we
manually (but algorithmically) rewrote the result into an equivalent
\texttt{Java} program.
%
%
Then, we ran J2BP to verify that no assertion failures are possible in
the obtained code, hence verifying the considered data-related properties.
For each benchmark, our tool was able to produce (within 1~sec.) a container
program for J2BP, and J2BP was able to complete the proof.
At the same time, note that neither Predator nor J2BP could perform the
verification alone (Predator does not reason about numerical data and J2BP does
not handle pointer-linked dynamic data structures).



\section{Possibilities of Generalizing the Approach}\label{sec:discussion}

Our method is built around the idea of specifying operations using a pair of
abstract configurations equipped with a transformation relation over their
components.
Although we have presented all concepts for the simple abstract domain
 of
SMGs restricted to \texttt{NULL}-terminated DLLs, the main idea can be used with
abstract domains describing other kinds of lists, trees, and other data structures too.
We now highlight what is needed for that.
The abstract domain to be used must allow one to define a~sufficiently fine-grained
\emph{assignment of representing objects}, which is necessary to define symbolic updates with
\emph{transparent semantics}. 
Moreover, one needs a shape analysis that computes annotations of the CFG with
a~precise enough invariant, equipped with the \emph{transformation relation},
encoding pointer manipulations in a \emph{transparent way}.
However, most shape analyses do actually work with such information internally
when computing abstract post-images (due to computing the effect of updates on
\emph{concretized} parts of the memory).
We thus believe that, instead of Predator, tools like, e.g.,
Slayer~\cite{berdine:slayer} or Forester~\cite{habermehl:forest} can be modified
to output CFGs annotated in the needed way.


%
%
%
Other than that, given an annotated CFG, our algorithms searching for container
operations depend mostly on an \emph{entailment procedure over symbolic updates}
(cf.  Sec.~\ref{sec:findingTCs}, App.~\ref{appendix:findingTCs}). 
Entailment of symbolic updates is, however, easy to obtain
 as an extension of
entailment over the abstract domain provided the entailment is able to identify which
parts of the symbolic shapes encode the same parts of the concrete
configurations.


\section{Conclusions and Future Work}\label{sec:conclusion}

We have presented and experimentally evaluated a method that can transform in
a~sound and fully automated way a program manipulating \texttt{NULL}-terminated
list containers via low-level pointer operations to a high-level container
program.
Moreover, we argued that our method is extensible beyond the considered list
containers (as illustrated also by our preliminary experiments with lists
extended with additional pointers and circular lists).
A formalization of an extension of our approach to other kinds of containers,
a~better implementation of our approach, as well as other extensions of our
approach (including, e.g., more sophisticated target code generation and
recognition of iterative container operations) are subject of our current and
future work.

\paragraph{Acknowledgement.} This work was supported by the Czech Science
Foundation project 14-11384S.


 \bibliographystyle{plain}
 \bibliography{literature}


\clearpage
\appendix

\vspace{-2mm}
\section{Finding transformation chains in an annotated CFG}\label{appendix:findingTCs}
\vspace{-1mm}

In this appendix, we provide a detailed description of our algorithm that can
identify chains which implement---w.r.t. some input/output valuations---the
symbolic operations $\sop_{\stmt}$ of a destructive container statement $\stmt
\in \stmtsc$ in a~given annotated CFG.
For that, as we have already said in Sect.~\ref{sec:findingTCs}, we pre-compute
sets $\widehat{\supdate}$ of the so-called \emph{atomic symbolic updates} that
must be performed to implement the effect of each symbolic update
$\supdate\in\sop_{\stmt}$.
Each atomic symbolic update corresponds to one pointer statement which performs
a destructive pointer update, a memory allocation, or a deallocation.
The set $\widehat{\supdate}$ can be computed by looking at the
differences in the selector values of the input and output SPCs of $\supdate$.

The algorithm identifying chains $\chain$ in the given annotated CFG iterates over all
symbolic updates $\supdate = (\spc,\becomes,\spc')\in\sop_{\stmt}$ where $\stmt
\in \stmtsc$ is a destructive container statement.
For each of them, it searches the annotated CFG for a location $\loc$ that is labelled by
an SPC $(\smg,\val) \in \mem(\loc)$ for which there is a sub-SMG $H \substruct
G$ and a valuation $\val_0$ such that $\semof{(H,\val_0)}\subseteq\semof
{\spc'}$.
The latter test is carried out using an entailment procedure on the abstract
heap domain used.
If the test goes through, the first point of a new chain $\chainof 0 = H$ and the
input valuation $\val = \val_0$ are constructed.

The rest of the chain $\chain$ is constructed as follows using the set
$\widehat{\supdate}$.
The algorithm investigates symbolic traces $\spc = \spc_0,\spc_1,\ldots$
starting from the location $\loc$.
At each $\spc_i$, $i>0$, it attempts to identify $\chainof i$ as a sub-SMG of
$\spc_i$ that satisfies the following property for some valuation $\val_i$:
Objects of $\chainof i$ are the $\succo$-successors of objects of
$\chainof{i-1}$ in $\spc_i$, and for the symbolic update $\supdate_i =
\semof{(\chainof{i-1},\val_{i-1}),\succo,(\chainof{i},\val_i)}$, it either holds
that $\semof{\supdate_i}\cap\opdataconst$ is an identity, or
$\semof{\supdate}\subseteq\semof{\supdate_a}$ for some atomic pointer change
$\supdate_a$.
This is, $\chainof{i-1}$ does either not change (meaning that the implementation
of the sought container operation is interleaved with some independent pointer
statements), or it changes according to some of the atomic symbolic updates
implementing the concerned container operation.
In the latter case, the $i$-th edge of the control path of $\chain$ is an
implementing edge of $\chain$.
We note that both of the tests can be carried out easily due to the transparency
of the semantics of symbolic updates (cf.  Section~\ref{sec:SpecDestrContOp}),
which makes all pointer changes specified by $U$, $U_i$, and $U_a$ ``visible''.

If $\chainof i$ and $\val_i$ satisfying the required property are not found, the
construction of $\chain$ along the chosen symbolic trace fails, otherwise the
algorithm continues extending it.
The chain $\chain$ is completed when every atomic symbolic update from
$\widehat{\supdate}$ is associated with some implementing edge.
That means that all steps implementing $\supdate$ have appeared, no other
modifications to $\chainof 0$ were done, and so $\chainof 0$ was modified
precisely as specified by $\supdate$.
Finally, an output valuation $\val'$ is chosen so that $\chain$ implements
$\sop$ w.r.t. $\val$ and $\val'$.

Note that since one of the conditions for $\chainof i$ is that its objects are
images of objects of $\chainof {i-1}$, operations that allocate regions would
never be detected because a~newly allocated region cannot be the image of any
object.
The construction of chains implementing such operations is therefore done in the
opposite direction, i.e., it starts from $\spc'$ and continues against $\succo$.

\vspace*{-1.5mm}
\section{Connectedness}
\vspace*{-1.5mm}
\label{appendix:connectedness}

The below formalized requirement of connectedness of a replacement recipe
$\rete$ reflects the fact that once some part of memory is to be viewed as a
container, then destructive operations on this part of memory are to be done by
destructive container operations only, until the container is destroyed by a container destructor.
Note that this requirement concerns operations dealing with the linking fields
only, the rest of the concerned objects can be manipulated by any low-level
operations.
Moreover, recall that destructive pointer statements implementing destructive
container operations can be interleaved by other independent pointer
manipulations, which are handled as the prefix/suffix edges of the appropriate
chain.

Let $\spc,\spc'\in\domof\mem$.
The \emph{successor CS} of a CS $\cshape\substruct\spc$ is a CS
$\cshape'\substruct\spc'$ s.t. $\objectsof{\cshape'} =
\objectsof{\spc'}\cap{\succo}(\objectsof{\cshape})$. The CS $\cshape$ is then called the
\emph{predecessor CS} of $\cshape'$ in $\spc$, denoted
$\cshape\mathrel{\succcs}\cshape'$. 
The \emph{successor set} of a CS $\cshape$ 
is defined recursively as the smallest set
$\succset\chain\cshape$ of CSs that contains $\cshape$ and each successor CS
$\cshape''$ of each CS $\cshape' \in \succset\chain\cshape$ that is not obtained
by a statement that appears in the path of some chain in $\domof\rete$.
Symmetrically, for a CS $\cshape$ which is the input of some chain in
$\domof\rete$, we define its \emph{predecessor set} as the smallest set
$\predset\tau\cshape$ of CSs that contains $\cshape$ and each predecessor
$\cshape''$ of each CS in $\cshape' \in \predset\tau\cshape$ that is not
obtained by a statement that appears in the path of some chain in $\domof\rete$.

Using the notions of successor and predecessor sets, $\rete$ is defined as
\emph{connected} iff it satisfies the following two symmetrical conditions:
(1)~For every CS $\cshape'$ in the successor set of a CS $\cshape$ that is the
output of some chain in $\domof\rete$, $\cshape'$ is either the input of some chain in
$\domof\rete$ or, for all successors $\cshape''$ of $\cshape'$,
$\semof{(\cshape',\succo,\cshape'')}\cap\opdataconst$ is an identity.
(2)~For every CS $\cshape'$ in the predecessor set of a CS $\cshape$ that is the
input of some chain in $\domof\rete$, $\cshape'$ is either the output of some chain in
$\domof\rete$ or, for all predecessors $\cshape''$ of $\cshape'$,
$\semof{(\cshape'',\succo,\cshape')}\cap\opdataconst$ is an identity.
Intuitively, a CS must not be modified in between of successive destructive
container operations.

\vspace*{-1.5mm}
\section{Parameter Assignment}\label{app:param_assgn}
\vspace*{-1.5mm}

Let $\rete$ be a~replacement recipe.
Further, for the replacement location $\rloc$ of every chain
$\chain\in\domof\rete$, let $\vsubst_{\rloc}$ be the variable renaming that
renames the input/output parameters of the symbolic operation $\sopof\chain$,
specifying the destructive container operation implemented by $\chain$, to fresh
names.
Below, we describe a method that computes a \emph{parameter assignment} relation
$\assign$ containing pairs $(\loc,x:=y)$ specifying which assignments are to be
inserted at which location in order to appropriately initialize the renamed
input/output parameters.
As said already in Sect.~\ref{section:param_assgn}, $\assign$ is constructed so
that input container parameters take their values from variables that are
outputs of preceding container operations.  


In particular, for an input container parameter $x$ and an output container
parameter $y$ (i.e., $x,y \in \varc$), a \emph{location where $x$ can be
identified with $y$} is s.t. every element of $\predset{}{\cshape_x}$ at the location is
in $\succset{}{\cshape_y}$ and it is not in $\succset{}{\cshape_{y'}}$ of any other output
parameter $y'$. Here $\cshape_v$ denotes the (unique) container shape that is the value of the container parameter  defined in $\rete$ .
The parameters must be properly assigned along every feasible path leading to
$\rloc$. 
Therefore, for every replacement location $\rloc$ and every input container
parameter $x$ of $\vsubst_\rloc(\sopof\rloc)$, we search backwards from $\rloc$
along each maximal $\succcs$-path $\path$ leading to $x$.
We look for a~location $\loc$ on $\path$ where $x$ can be identified with $y$.
If we find it, \mbox{we add $(\loc,x:=y)$ to $\assign$.}
If we do not find it on any of the paths, $\assign$ is \emph{incomplete}.
$\chain$ is then removed from $\domof\rete$, and the domain of $\rete$ is pruned
until it again becomes globally consistent.
The computation of $\assign$ is then run anew.
If the search succeeds for all paths, then we call $\assign$ \emph{complete}.

Region parameters are handled in a simpler way, using the fact that the input
regions are usually sources or targets of modified pointers (or they are
deallocated within the operation).
Therefore, in the statements of the original program that perform the
corresponding update, they must be accessible by access paths (consisting of a
pointer variable $v$ or a selector value $v \codesel s$).
These access paths can then be used to assign the appropriate value to the
region parameters of the inserted operations.
Let $\chain$ be a chain in $\domof\rete$ with an input region parameter $x \in
\varp$.
For every control path $\path$ leading to $\rlof\chain$, taking into account
that $\rlof\chain$ can be shared with other chains $\chain'$, we choose the
location $\loc_0$ which is the farthest location from $\rlof\chain$ on $\path$
s.t. it is a starting location of a chain $\chain'\in\domof\rete$ with
$\rlof{\chain'} = \rlof\chain$.
We then test whether there is an access path $a$ (i.e., a variable $\code{y}$ or
a selector value $\code{y}\codesel \code{s}$) s.t. all SPCs of $\memof{\loc_0}$
have a region $\region$ which is the $\succo$-predecessor of input region $x$ of
the chain $\chain'$ and which is accessible by $a$.
If so, we put to $\assign$ the pair $(\loc,x:=a)$.
Otherwise, $\chain$ is removed from $\rete$, and the whole process of pruning
$\rete$ and computing the assignments is restarted.

\section{Correctness\label{app:correctness}}


We now argue that under the assumption that the replacement recipe $\rete$ is
locally and globally consistent and connected and the parameter assignment
relation $\assign$ is complete, our code replacement procedure preserves the
semantics of the input annotated CFG $\cfg$.
We show that computations of $\cfg$ and of the CFG $\cfg'$ obtained by the
replacement procedure are equivalent in the sense that for every computation
$\computation$ of one of the annotated CFGs, there is a computation in the other one,
starting with the same PC, such that both computations can be divided into the
same number of segments, and at the end of every segment, they arrive to the
same PCs.
The segments will be delimited by borders of chains that the computation in $\cfg$
passes through.

Formally, the control path of every computation $\computation$ of $\cfg$ is a
concatenation of segments $s_1 s_2 \cdots $.
There are two kinds of segments: 
(1)~A \emph{chain segment} is a control path of a chain of $\rete$, and it is
maximal, i.e., it cannot be extended to either side in order to obtain a control
path of another chain of $\rete$.
(2)~A \emph{padding segment} is a segment between two chain segments such that
none of its infixes is a control path of a chain.

We define the image of an edge $\edge  = \edgeof {\loc_1} {\stmt_1} {\loc_2}$ of
$\cfg$ as the edge $\edge' = \edgeof {\loc_1} {\stmt_1'} {\loc_2}$ of $\cfg'$
provided no new edge $\edgef  = \edgeof {\loc_2} {\stmt_2} {\loc_2^\bullet}$ is
inserted to $\cfg'$.
Otherwise, the image is the sequence of the edges $\edge'\edgef$.
The image of a path $\path$ of $\cfg$ is then the sequence of images of edges of
$\path$, and $\path$ is called the origin of $\path'$.
Similarly to computations of $\cfg$, every computation $\computation'$ of
$\cfg'$ can be split into a sequence of chain and padding segments.
They are defined analogously, the only difference is that instead of talking
about control paths of chains, the definition talks about \emph{images of}
control paths of chains.

We say that a PC $\conf = (\mg,\val)$ of a computation of $\cfg$ is
\emph{equivalent} to a PC $\conf'= (\mg',\val')$ of a computation of $\cfg'$ iff
$\mg = \mg'$ and $\val\subseteq\val'$.
That is, the PCs are almost the same, but $\conf'$ can define more variables (to
allow for equivalence even though $\conf'$ defines values of parameters of
container operations).
A computation $\computation$ of $\cfg$ is equivalent to a computation
$\computation'$ of $\cfg'$ iff they start with the same PC, have the same number
of segments, and the configurations at the end of each segment are equivalent.
This is summarised in the following theorem.

\begin{theorem}\label{theorem:correctness} For every computation of $\cfg$
starting with $(\initialloc,\conf)$, $\conf\in\semof{\memof{\initialloc}}$,
there is an equivalent computation of $\cfg'$, and vice versa. \end{theorem}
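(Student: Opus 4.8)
The plan is to prove the two directions of the equivalence separately, establishing in each case a correspondence between computations segment by segment, and then to glue these segment-level correspondences together by induction on the number of segments.

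\textbf{Direction $\cfg \to \cfg'$.} First I would take a computation $\computation$ of $\cfg$ from $(\initialloc,\conf)$ and split its control path into the canonical concatenation $s_1 s_2 \cdots$ of chain segments and padding segments as defined above; global consistency (Points \ref{cons:rloc}, \ref{cons:implementing}, \ref{cons:infix}) is what guarantees that this decomposition is well defined along this particular computation, since it says a location/edge is a replacement location / implementing edge uniformly within all symbolic traces passing it. For a padding segment, none of its edges is an implementing edge of any chain in $\domof\rete$ (again by global consistency), so the code-replacement procedure only possibly inserts parameter-assignment edges $(\loc, x:=y)$ and container-call edges that have been moved there; I would argue that the inserted assignments only touch fresh renamed variables (disjoint from the program variables by the renaming $\vsubst_\rloc$) and container variables, hence executing the image of the padding segment yields an equivalent PC (same $\mg$, possibly more variables). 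For a chain segment with chain $\chain$, replacement location $\rlof\chain$, operation $\sopof\chain$ and valuations $\inof\chain,\outof\chain$, I would use local consistency: $\chain$ implements $\sopof\chain$ w.r.t. these valuations, so $\semof{\supdate_\chain}\subseteq\semof{\sopof\chain}$, and $\rlof\chain$ is a replacement location, i.e. $\initrestrof{\semof{\memof{\loc_0}}}{\semof{\prefix\cdot\impl\cdot\suffix}} = \initrestrof{\semof{\memof{\loc_0}}}{\semof\path}$. Combining this with completeness of $\assign$ (which ensures the renamed input parameters of the inserted call actually receive the values $\inof\chain$ prescribes along every feasible path reaching $\rlof\chain$) and connectedness (which ensures the CS linking fields are untouched except by the container ops, so the data-constant abstraction $\opdataconst$ is not violated), I get that running the image of the chain segment in $\cfg'$ — i.e. prefix edges, then the call to $\vsubst_{\rlof\chain}(\sopof\chain)$, then suffix edges — produces a PC equivalent to the one $\cfg$ reaches at the end of the chain segment. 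Concatenating segments by induction gives the equivalent computation of $\cfg'$.

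\textbf{Direction $\cfg' \to \cfg$.} Here I would start from a computation $\computation'$ of $\cfg'$, split its control path into chain and padding segments in the image sense (using the ``origin'' map on paths), and reconstruct a computation of $\cfg$ that has the origin control path. For padding segments this is immediate since the only edges of $\cfg'$ not in the image of an edge of $\cfg$ are the inserted assignment edges and container-call edges, all confined to chain segments and their boundaries. For a chain segment one has to run, in $\cfg$, the full low-level implementing code of the chain (plus its prefix/suffix) and check it lands on an equivalent PC; this is again the replacement-location equation read in the opposite direction, together with the fact that any PC on which the inserted call fires lies in $\semof{\memof{\loc_0}}$ (because the memory invariant $\mem$ over-approximates reachable configurations and the call edge was inserted precisely at $\rlof\chain$). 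Surjectivity of the map from $\cfg$-computations onto $\cfg'$-computations claimed in the paper would drop out of the first direction, so strictly I would phrase the second direction as: every $\cfg'$-computation has an equivalent $\cfg$-computation.

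\textbf{Main obstacle.} The delicate point is the chain-segment case, and specifically making the argument robust when chains overlap and several chains share one replacement location (Point \ref{cons:infix}: infix condition, Point \ref{cons:operation}: same operation). One must show the multiple chains sharing $\rlof\chain$ do not lead to conflicting insertions and that ``shifting'' non-implementing edges into prefix/suffix (which is what makes $\prefix\cdot\impl\cdot\suffix$ a genuine reordering of $\path$ with the same restricted semantics) remains sound when these reorderings from different chains interleave along a single computation. I expect to handle this by choosing, for each maximal symbolic trace, the outermost chain at each shared replacement location (the infix-maximal one), using global consistency to show its implementing/prefix/suffix structure subsumes that of the inner chains, and then doing the segment induction only w.r.t. these maximal chains. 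Verifying that the parameter-assignment edges inserted for the inner chains are either redundant or correctly placed along the way is the most calculation-heavy part, and is where completeness of $\assign$ and the precise definition of ``location where $x$ can be identified with $y$'' via predecessor/successor sets of CSs get used in full.
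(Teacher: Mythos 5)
Your overall strategy is the same as the paper's: decompose the control path of a computation into chain and padding segments, prove the two directions separately, and glue segment-level equivalences together by induction on the number of segments. Your forward direction matches the paper's Lemma~\ref{lemma:segments} and Lemma~\ref{lemma:equivalent} closely, including the way you propose to resolve overlapping chains at a shared replacement location by taking the infix-maximal chain — that is exactly how the paper establishes that each chain segment carries a unique maximal chain $\chain_i$ and that every implementing edge and replacement location on the path belongs to some $\chain_i$.

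The genuine gap is in the reverse direction. You write that you would ``split the control path of $\computation'$ into chain and padding segments in the image sense (using the origin map on paths)'' and then reconstruct a $\cfg$-computation over the origin path, but this presupposes the key fact that the control path of every computation of $\cfg'$ \emph{is} the image of the control path of some computation of $\cfg$ — and that is precisely what needs proof, not an organizational convenience. Inside a chain segment the image path is a reordering $\prefix\cdot\edge_\stmt\cdot\suffix$ of the original path, so a computation of $\cfg'$ could a priori follow the image of a chain segment partway and then branch onto a path that is not the image of any feasible continuation in $\cfg$. The paper closes this hole with a separate lemma (Lemma~\ref{lemma:image}), proved by contradiction: take the longest prefix of the $\cfg'$-control-path that is an image of a feasible $\cfg$-prefix, observe that divergence cannot occur in a padding segment (edges there coincide up to parameter assignments), and that divergence in a chain segment would mean two branches are simultaneously enabled from one configuration, contradicting the assumption — stated explicitly right after the theorem — that branching in $\cfg$ (hence in $\cfg'$, since replacement introduces no nondeterminism) is deterministic. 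Your sketch never invokes determinism of branching, and without it (or a substitute argument) the claim that every $\cfg'$-computation lifts to an origin $\cfg$-computation does not follow; the ``replacement-location equation read in the opposite direction'' only tells you the image path is \emph{also} feasible and lands on an equivalent PC, not that it is the \emph{only} path the $\cfg'$-computation can take.
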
 For
simplicity, we assume that every branching in a CFG is deterministic (i.e.,
conditions on branches starting from the same location are mutually exclusive).

In order to prove Theorem~\ref{theorem:correctness}, we first state and prove
several lemmas.

\begin{lemma}\label{lemma:segments} Let $\computation$ be a computation of an
annotated CFG $\cfg$ over a control path $\path$ and a symbolic trace $\symtrace =
\spc_0,\spc_1,\cdots$. Then: \begin{enumerate}
\item \label{segments:oneway}
There is precisely one way of how to split it into segments.
\item \label{segments:chaini}
If the $i$-th segment is a chain segment which spans from the $j$-th to the $k$-th location of $\path$, 
then $\spc_j,\ldots,\spc_k$ is a symbolic trace of some chain $\chain_i$ of $\rete$.
\item \label{segments:belong}
Every replacement point and implementing edge on $\path$ belong to some $\chain_i$.
\end{enumerate}
\end{lemma}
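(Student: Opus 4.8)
\textbf{Proof plan for Lemma~\ref{lemma:segments}.}

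The plan is to prove the three parts essentially in order, since later parts reuse facts established earlier. First I would unfold the definitions: a \emph{chain segment} is a maximal infix of $\path$ that is a control path of a chain in $\domof\rete$, and a \emph{padding segment} is a maximal infix lying strictly between two consecutive chain segments and containing no control path of any chain as an infix. For Part~\ref{segments:oneway}, the natural argument is to show that the set of positions of $\path$ covered by chain segments is well defined and that distinct chain segments cannot overlap unless they coincide on the overlap in a way that forces a shared replacement location. This is exactly where global consistency enters: by Point~\ref{cons:infix} of global consistency, if two chains $\chain,\chain'\in\domof\rete$ appearing in the same symbolic trace share an edge, then $\loc_\tau=\loc_{\tau'}$, and moreover one of $\tau,\tau'$ is an infix of the other; hence overlapping chain control paths are nested, and the maximal ones among them partition the ``chain-covered'' part of $\path$ uniquely. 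Once the chain segments are fixed, the padding segments are forced as the complementary infixes, so the splitting is unique. I would spell this out by induction along $\path$ from left to right: at each position, either we are inside the (unique) maximal chain currently open, or we are not, and this dichotomy is determined by the symbolic trace $\symtrace$ via the membership of subtraces in $\domof\rete$.

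For Part~\ref{segments:chaini}, suppose the $i$-th segment is a chain segment spanning locations $\loc_j,\ldots,\loc_k$ of $\path$. By definition of a chain segment it is the control path of some chain $\chain_i\in\domof\rete$; what needs checking is that the corresponding stretch $\spc_j,\ldots,\spc_k$ of the symbolic trace $\symtrace$ of $\computation$ is \emph{a} symbolic trace of $\chain_i$. Here I would invoke that $\symtrace$ is a symbolic trace of the computation $\computation$ (so $\conf_m\in\semof{\spc_m}$ and $\spc_m\succspc\spc_{m+1}$ throughout), together with the fact that a chain's symbolic trace is determined by its control path up to the choice of SPCs at each location, constrained by $\substruct$ and $\succspc$. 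Strictly speaking, the chain $\chain_i$ was defined over \emph{some} symbolic trace; to land on the particular one seen in $\computation$ one uses that along the control path the successor relation together with the requirement that each $\chainof m\substruct\smg_m$ with $\spc_m\in\mem(\loc_m)$ pins down a symbolic trace, and that $\computation$ being a genuine computation forces the SPCs of its trace to be exactly those. If the paper's conventions allow the chain to be re-instantiated on any admissible symbolic trace, this part is almost immediate; if not, this is the one place where a small amount of care (and possibly an appeal to a determinism/coherence property of $\succspc$ along control paths) is needed — I expect \emph{this} to be the main obstacle, namely matching the abstractly-defined chain to the concrete symbolic trace realised by $\computation$.

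For Part~\ref{segments:belong}, I would argue by contradiction using global consistency Points~\ref{cons:rloc} and~\ref{cons:implementing}. Suppose an implementing edge $\edge$ of some chain lies on $\path$ but belongs to no $\chain_i$, i.e.\ $\edge$ falls inside a padding segment. Since $\computation$ is a computation, $\symtrace$ is a maximal (or extendable to a maximal) symbolic trace passing $\edge$; Point~\ref{cons:implementing} then yields a chain $\chain'\in\domof\rete$ for which $\edge$ is implementing and whose symbolic trace is a sub-sequence of $\symtrace$ passing $\edge$. But then the control path of $\chain'$ is an infix of $\path$ through $\edge$, contradicting the defining property of a padding segment (no infix is a control path of a chain), and also contradicting maximality of the neighbouring chain segments. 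The same argument with Point~\ref{cons:rloc} handles replacement locations. Since every implementing edge of every $\chain_i$ trivially belongs to $\chain_i$, and by Part~\ref{segments:oneway} the segmentation exhausts $\path$, we conclude that each implementing edge and each replacement location on $\path$ lies in some chain segment and hence in its associated $\chain_i$. Finally I would remark that the local-consistency requirement that each chain starts and ends with an implementing edge ensures the chain segments are exactly the control paths of the $\chain_i$ (no spurious padding is absorbed into a chain segment), tying the three parts together.
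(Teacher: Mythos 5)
Your overall strategy is the paper's: all three parts are derived from local and global consistency of $\rete$. Parts \ref{segments:oneway} and \ref{segments:belong} are argued essentially as in the paper (which proves them in the order \ref{segments:chaini}, \ref{segments:belong}, \ref{segments:oneway}, but that is largely cosmetic). The genuine gap is in Part~\ref{segments:chaini}, and you have correctly located it yourself: a chain segment is, by definition, only a \emph{control path} of some chain $\chain\in\domof\rete$, and that chain may have been instantiated over a different symbolic trace than the $\spc_j,\ldots,\spc_k$ realised by $\computation$. Your proposed fix --- an appeal to ``a determinism/coherence property of $\succspc$ along control paths'' --- is not available: $\memof\loc$ may contain several SPCs and $\succspc$ is just a relation, so nothing of that sort pins the chain's symbolic trace to the one of $\computation$. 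The paper closes the gap differently: by local consistency the two extreme edges of the chain segment are implementing edges of the defining chain $\chain$; Point~\ref{cons:implementing} of global consistency, applied to the maximal symbolic trace $\symtrace$, then yields chains $\chain'$ and $\chain''$ having those extreme edges as implementing edges and whose symbolic traces \emph{are} infixes of $\symtrace$; finally Point~\ref{cons:infix} forces the control path of one of $\chain',\chain''$ to be an infix of the other's, and maximality of the segment shows that the larger one spans exactly locations $j$ to $k$. That chain is the required $\chain_i$.

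A secondary consequence is that your proof of Part~\ref{segments:oneway} is slightly circular as written: you invoke Point~\ref{cons:infix} to show that overlapping chain control paths are nested, but that point only applies to chains \emph{appearing within the same symbolic trace}, and producing such chains for the trace $\symtrace$ of $\computation$ is precisely what the argument for Part~\ref{segments:chaini} sketched above provides. This is why the paper establishes Part~\ref{segments:oneway} last. Your Part~\ref{segments:belong} is sound modulo one detail the paper makes explicit: when the offending chain $\chain'$ overlaps a chain segment rather than sitting wholly inside a padding segment, one needs Point~\ref{cons:always} (together with Point~\ref{cons:infix} and maximality of $\chain_i$) to conclude that its implementing edges are already implementing edges of the corresponding $\chain_i$, so no contradiction with the segmentation arises there.
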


\begin{proof} Let $\computation = (\loc_0,\conf_0),(\loc_1,\conf_1),\ldots$.
Let the $i$-th segment be a chain segment that spans from the $j$-th to the
$k$-th location of $\comp$. We start by Point~\ref{segments:chaini} of the
lemma.

Due to consistency, it holds that $\spc_{j},\ldots,\spc_{k}$ is a symbolic trace
of some chain $\chain\in\domof\rete$. Particularly, it can be argued as follows:
The control path of a chain segment is by definition a control path of some
chain $\chain$ of $\rete$ but there is no chain of $\rete$ with the control path
that is an infix of $\path$ and larger than the one of $\chain$.  The control
path of $\chain$ contains the replacement location of $\chain$.  Let it be
$\loc_l,j\leq l \leq k$, and let $\edge$ and $\edge'$ be the extreme edges of
the control path of $\chain$. By local consistency, $\edge$ and $\edge'$ are
implementing.  By Point~\ref{cons:implementing} of global consistency, the
extreme edges are also implementing edges of some chains $\chain'$ and
$\chain''$ with the symbolic traces being infixes of $\symtrace$.  By
Point~\ref{cons:infix}(b) of global consistency, since control paths of
$\chain'$ and $\chain''$ overlap with that of $\chain$, they share the
replacement location with it, and hence, by Point~\ref{cons:infix}(a) of
global consistency, the control path of one of them is an infix of the control
path of the other. The larger of the two thus spans from the $j$-th  to the $k$-th
location (not more since the control path of $\chain$ is maximal by definition).
Hence, the longer of the two chains $\chain',\chain''$ can be taken as
$\chain_i$.

Point~\ref{segments:belong} of the lemma is also implied by consistency.  If
there was a contradicting implementing edge or replacement location,
Points~\ref{cons:rloc} or \ref{cons:implementing} of global consistency,
respectively, would imply that it belongs to a chain $\chain\in\domof\rete$ such
that its symbolic trace is an infix of $\symtrace$. The control path of $\chain$
cannot be an infix of a padding segment since this would contradict the
definition of a padding segment. Moreover, the control path of $\chain$ cannot
overlap with the control path of any $\chain_i$. By Point~\ref{cons:infix}(b) of
global consistency, it would share the replacement location with $\chain_i$,
hence by Point~\ref{cons:infix}(a) and by the maximality of $\chain_i$, $\chain$
would be an infix of $\chain_i$, and by Point~\ref{cons:always} of global
consistency, its implementing edges would be implementing edges of $\chain_i$.

Finally, Point~\ref{segments:oneway} of the lemma is easy to establish using the
above reasoning. \qed \end{proof}

\begin{lemma}\label{lemma:equivalent} A computation of an annotated CFG $\cfg$ over a
control path $\path$ is equivalent to a~computation  over the image of $\path$
of the CFG $\cfg'$ obtained from $\cfg$ by the procedure for replacing
destructive pointer operations. \end{lemma}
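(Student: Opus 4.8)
The plan is to proceed by induction on the number of segments into which the computation $\computation$ of $\cfg$ over $\path$ is split (this decomposition is unique by Lemma~\ref{lemma:segments}, Point~\ref{segments:oneway}). The statement to be strengthened for the induction is: for every prefix of $\computation$ ending at a segment boundary, there is a computation of $\cfg'$ over the image of the corresponding prefix of $\path$ that has the same number of segments and whose PC at every segment boundary is equivalent to the PC of $\computation$ at the matching boundary. The base case (zero segments) is trivial since both computations start with the same PC and $\val \subseteq \val$. For the inductive step I would split into two cases according to whether the next segment is a padding segment or a chain segment.

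For a \textbf{padding segment}, first observe that by Lemma~\ref{lemma:segments}, Point~\ref{segments:belong}, no edge of a padding segment is an implementing edge or a replacement location of any chain in $\domof\rete$; hence the code replacement procedure leaves each such edge essentially untouched---it only maps $\edge$ to $\edge'$ (or to $\edge'\edgef$ when a parameter-assignment edge $\edgef$ is inserted right after $\loc_2$). The statement on $\edge'$ is the same statement as on $\edge$, so its operation $\opof{\stmt}$ acts identically on the $\mg$-component; the only difference is that $\cfg'$ may carry extra container-parameter variables, but by connectedness and the form of the inserted assignments these are exactly the variables on which equivalence allows $\val'$ to be richer, and the inserted assignment edges $\edgef$ only touch those fresh parameter variables, preserving equivalence of the $\mg$-component and the inclusion $\val \subseteq \val'$. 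So running the image of the padding segment from an equivalent PC yields an equivalent PC.

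For a \textbf{chain segment} spanning locations $\loc_j,\ldots,\loc_k$ of $\path$, Lemma~\ref{lemma:segments}, Point~\ref{segments:chaini} gives a chain $\chain_i \in \domof\rete$ whose symbolic trace is $\spc_j,\ldots,\spc_k$ and whose control path is exactly this segment. The image of this segment in $\cfg'$ consists of: the non-implementing (prefix/suffix) edges of $\chain_i$ in their original order, a single edge carrying the call $\vsubst_{\rloc}(\sopof{\rloc})$ inserted at the replacement location $\rlof{\chain_i}$, plus the parameter-assignment edges from $\assign$ that precede it. Here the key inputs are: local consistency, which guarantees $\chain_i$ implements $\sopof{\chain_i}$ w.r.t.\ $\inof{\chain_i}/\outof{\chain_i}$ and that $\rlof{\chain_i}$ is a genuine replacement location, i.e.\ $\initrestrof{\semof{\memof{\loc_j}}}{\semof{\prefix\cdot\impl\cdot\suffix}} = \initrestrof{\semof{\memof{\loc_j}}}{\semof\path}$; completeness of $\assign$, which ensures the renamed input parameters of the inserted call receive exactly the values $\inof{\chain_i}$ prescribes (read either from an output container parameter of a preceding chain along every $\succcs$-path, or from an access path in the original code); and the defining property $\opof\opcode = \semof{\sopof\opcode}\cap\opdataconst$ of the container operation together with $\opof\opcode \subseteq \opdataconst$, so that the call reproduces on the $\mg$-component exactly the destructive effect that the removed implementing edges had. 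Combining the replacement-location equation (which lets us reorder into $\prefix\cdot\impl\cdot\suffix$ without changing the operation on configurations reachable at $\loc_j$, and hence, since the relevant $\conf_j\in\semof{\memof{\loc_j}}$, without changing the effect on $\conf_j$ itself) with the match of $\supdate_{\chain_i}$ against $\sopof{\chain_i}$ shows the image segment takes the (equivalent) PC at $\loc_j$ to a PC equivalent to $\conf_k$; the extra variables in $\val'$ are again only the fresh parameters, so $\val \subseteq \val'$ is maintained.

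The \textbf{main obstacle} I anticipate is the chain-segment case, specifically reconciling three reorderings at once: the replacement procedure physically removes the implementing edges and inserts the call at $\rlof{\chain_i}$, which is not in general the position of the last implementing edge, so one must use the replacement-location condition to argue that the effect of $\prefix\cdot\impl\cdot\suffix$ on source configurations from $\semof{\memof{\loc_j}}$ equals that of $\path$'s segment, then separately argue that executing the single call at $\rlof{\chain_i}$ (flanked by the preserved prefix/suffix edges and the parameter-assignment edges) realises the operation $\impl$ realises. This requires carefully tracking that the edge-addition operation (adding $\loc^\bullet$, rerouting outgoing edges) indeed threads the inserted call and assignments onto \emph{all} control paths through $\rlof{\chain_i}$ without disturbing other segments, which is where global consistency Points~\ref{cons:rloc}--\ref{cons:always} (already exploited in Lemma~\ref{lemma:segments}) do the bookkeeping. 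A secondary subtlety is memory safety: if $\computation$ hits a memory error inside a would-be chain, connectedness fails and no container variable was introduced there, so that segment is actually a padding segment and the error is faithfully reproduced---this should be noted but needs no separate argument given how segments were defined.
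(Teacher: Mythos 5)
Your proposal is correct and follows essentially the same route as the paper's proof: induction on the number of segments, with the base case being the empty/initial computation, and a case split on padding versus chain segments that invokes Lemma~\ref{lemma:segments} for the segment decomposition, the replacement-location condition and completeness of $\assign$ for the chain case, and the observation that inserted parameter assignments only touch fresh variables for the padding case. Your write-up is in fact somewhat more explicit than the paper's about which consistency properties are used at which step, but the argument is the same.
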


\begin{proof} Let $\computation = (\loc_0,\conf_0),(\loc_1,\conf_1),\ldots$ be a
computation of $\cfg$. The equivalent computation $\computation'$ of $\cfg'$ can
be constructed by induction on the number $n$ of segments of $\computation$.

Take the case $n=0$ as the case when $\comp$ consists of the initial state
$(\initialloc,\conf_0)$ only. The claim of the lemma holds since both $\comp$
and $\comp'$ are the same. A computation $\computation$ with $n+1$ segments
arises by appending a segment to a prefix $\comp_n$ of a computation $\comp$
which contains its  $n$ first segments, for which the claim holds by the
induction hypothesis. That is, assuming that $\comp_n$ ends by $(\loc,\conf)$
there is a prefix of a computation $\comp_n'$ of $\cfg'$ which is equivalent to
$\comp_n$ and ends by $(\loc',\conf')$ where $\conf$ and $\conf'$ are
equivalent, and the control path of $\comp_n'$ is the image of the control path
of $\comp_n$.

Let the $(n+1)$-th segment of $\comp$ be a padding one. Then the $(n+1)$-th
segment of $\cfg'$ is an infix of a computation that starts by $(\loc',\conf')$
and continues along the image of the control path of the $(n+1)$-th segment of
$\comp$. The paths are the same up to possibly inserted assignment of parameters
of container operations in the control of $\comp_n'$ since, by
Lemma~\ref{lemma:segments}, a padding segment does not contain replacement
points and implementing instructions. Assignments of the parameters of container
operations do not influence the semantics of the path (modulo the values of the
parameters) since the path does not contain any calls of operations and the
parameters are not otherwise used. Hence the two configurations at the ends of
the $(n+1)$-th segments of $\comp$ and $\comp'$ must be equivalent.

Let the $(n+1)$-th segment of $\comp$ be a chain segment. Let $\symtrace =
\spc_0,\spc_1,\ldots$ be a symbolic trace of $\comp$. By
Lemma~\ref{lemma:segments}, we know that the infix of $\symtrace$ that
corresponds to the $(i+1)$-th segment is a symbolic trace of some $\chain_i$ of
$\rete$, and that the control path of the segment does not contain any
replacement points other than the one of $\chain_i$, and that all implementing
edges in it are implementing edges of $\chain_i$. Let $\path_i$ be the control
path of $\chain_i$ (and of the $i+1$th segment). The control path $\path_i$ is
hence exactly $\path_i' = \prefix.\edge_\stmt.\suffix$ where $\stmt$ is the
statement calling the container operation $\opof \stmt$ of $\chain_i$ with
parameters renamed by $\vsubst_{\loc_{\chain_i}}$, and $\prefix$ and $\suffix$
are the prefix and suffix edges of $\chain_i$. This means that under the
assumption that parameters of the operation $\stmt$ are properly assigned at the
moment of the call, $\path_i'$ modifies the configuration $\conf'$ in the same
way as $\path_i$ modifies $\conf$, modulo assignment of parameters, and hence
the resulting configurations are equivalent. The parameters are properly
assigned since $\assign$ was complete.
\qed \end{proof}

\begin{lemma}\label{lemma:image} The control path of every computation of the
annotated CFG $\cfg'$ obtained by the procedure for replacing destructive pointer
operations from a CFG $\cfg$ is an image of a control path of a computation in
$\cfg$.\end{lemma}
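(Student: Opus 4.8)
The plan is to invert the code-replacement transformation and run the argument of Lemma~\ref{lemma:equivalent} backwards, from $\cfg'$ to $\cfg$. First I would record the structural inverse of the two primitives that build $\cfg'$. Edge removal is transparent on locations: since the original $\cfg$ contains no $\stmtskip$-edges, every $\stmtskip$-edge $\edgeof{\loc_1}{\stmtskip}{\loc_2}$ of $\cfg'$ occupies exactly the place of a uniquely determined implementing edge $\edgeof{\loc_1}{\stmt_1}{\loc_2}$ of $\cfg$. Edge addition introduces a fresh location $\loc^\bullet$ whose unique incoming edge is the freshly inserted one -- a container-operation call $\vsubst_\rloc(\sopof\rloc)$ at a replacement location, or a parameter assignment from $\assign$ -- and whose outgoing edges are precisely the redirected originals leaving $\loc$; moreover, after the insertion the location $\loc$ retains only the inserted edge as an outgoing edge, so any computation of $\cfg'$ visiting $\loc$ is forced to take that edge next. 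Hence the operation that contracts every such $\loc^\bullet$ back into $\loc$, deletes the inserted edges, and replaces every $\stmtskip$-edge by its original statement is well defined on control paths of $\cfg'$ and turns a control path $\path'$ into a control path $\path$ of $\cfg$; by the ``forced step'' property it is a one-sided inverse to taking images, i.e.\ the image of $\path$ equals $\path'$ (the step into $\loc^\bullet$ is exactly what the image of an edge appends as $\edgef$).

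It then remains to realize $\path$ as a computation of $\cfg$. Given a computation $\computation'$ of $\cfg'$ with control path $\path'$, I would decompose $\path'$ into chain and padding segments -- using for $\cfg'$ the analogue of the decomposition of Lemma~\ref{lemma:segments} -- take $\path$ to be the concatenation of the origins of these segments, and build a computation of $\cfg$ over $\path$ by induction on the number of segments, maintaining the invariant that the constructed computation and $\computation'$ reach equivalent PCs (equal up to the fresh parameter variables) at every segment border, starting from the common initial PC. For a padding segment the two fragments differ only by possibly inserted parameter assignments, which never fall on images of implementing edges and do not touch variables of the original program, so the invariant is preserved at once. For a chain segment that is the image of the control path $\path_i$ of a chain $\chain_i\in\domof\rete$, the $\cfg'$-fragment has, up to $\stmtskip$-edges and inserted parameter assignments, the effect $\semof{\suffix}\circ\semof{\edge_\stmt}\circ\semof{\prefix}$, where $\edge_\stmt$ carries the call of $\sopof{\chain_i}$ with parameters renamed by $\vsubst_\rloc$; since $\assign$ is complete those parameters carry the right values at the call, and local consistency gives both that $\chain_i$ implements $\sopof{\chain_i}$ w.r.t.\ its input/output valuations (so $\semof{\edge_\stmt}$ agrees with $\semof{\impl}$ on the relevant configurations) and that $\rloc$ is a replacement location of $\chain_i$ (so $\initrestrof{\semof{\memof{\loc_0}}}{\semof{\prefix\cdot\impl\cdot\suffix}} = \initrestrof{\semof{\memof{\loc_0}}}{\semof{\path_i}}$, with $\loc_0$ the first location of $\path_i$). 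Since the PC reached at the start of the segment lies in $\semof{\memof{\loc_0}}$ -- because $\mem$ over-approximates the reachable configurations -- running $\path_i$ from it in $\cfg$ ends in a PC equivalent to the one $\computation'$ reaches at the end of the segment, which closes the induction.

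The step I expect to be the main obstacle is this chain-segment case, where two further points need care. First, one must check that the constructed computation actually follows $\path_i$ from the equivalent starting PC rather than branching elsewhere; this uses the assumption that branching is deterministic together with the replacement-location equivalence above, which makes the branch conditions of the prefix and suffix edges evaluate consistently with those taken by $\computation'$. Second, a computation $\computation'$ may end in the middle of a chain segment -- because it is finite, or because the container call or some later suffix edge is undefined on the current PC -- so the segment-wide semantic equality cannot be applied directly; here I would show that every proper prefix of the $\cfg'$-fragment is matched by a corresponding prefix of the ``unscrambled'' path $\path_i$, using that prefix and suffix edges never modify the container-linking fields and that $\edge_\stmt$ performs precisely the linking-field update of $\impl$, so that the reordering underlying the replacement location is sound on every prefix as well; a minor convention is also needed for a $\cfg'$-computation that stops exactly at an insertion location, for which $\path'$ is a prefix of (rather than literally equal to) the image of its origin. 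Once these points are settled, Lemma~\ref{lemma:image} follows, and together with Lemma~\ref{lemma:equivalent} and determinism of branching it yields the ``vice versa'' direction of Theorem~\ref{theorem:correctness}.
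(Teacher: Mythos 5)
Your proposal is correct and follows essentially the same route as the paper: both arguments rest on the segment decomposition of Lemma~\ref{lemma:segments}, the segment-wise equivalence established in Lemma~\ref{lemma:equivalent}, and determinism of branching to force the $\cfg'$-computation to follow the image of each chain segment. The paper merely packages this as a contradiction via a longest prefix of $\path'$ that is an image of a $\cfg$-computation's control path, whereas you run the same argument as a direct induction over segments (additionally flagging the mid-segment-termination corner case, which the paper's proof elides).
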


\begin{proof} By contradiction. Assume that the control path $\path'$ of a
computation $\comp'$ of $\cfg'$ is not an image of a control path of a
computation in $\cfg$.
Take the longest prefix $\path''$ of $\path'$ such that there exists a
computation $\comp$ of $\cfg$ with a prefix that has a control path $\path$ and
$\path''$ is the image of $\path$. 
$\path''$ is a sequence of $n$ images of segments of $\path$ ended by an
incomplete image of the $(n+1)$-th segment of $\path$.
By Lemma~\ref{lemma:equivalent}, the configurations of $\comp$ and $\comp'$ at
the end of the $n$-th segment and its image, respectively, are equivalent.
The computation $\comp'$ then continues by steps within the image of the
$(n+1)$-th segment of $\comp$ until a point where its control path diverges from
it. 
The $(n+1)-$th segment of $\comp$ cannot be a padding segment since edges of
padding segments and their images do not differ (up to assignments of parameters
of container operations, which are irrelevant here).
The $(n+1)$-th segment of $\comp$ is thus a chain segment. By equivalence of
semantics of control paths of chains and their images
(Section~\ref{section:chains}), $\comp'$ must have had a choice to go through
the image of the $(n+1)$-th segment until reaching its ending location with a
configuration equivalent to the one of $\comp$.
However, this means that branching of $\cfg'$ is not deterministic. 
Since the code replacement procedure cannot introduce nondeterminism, it
contradicts the assumption of the branching of $\cfg$ being deterministic.
\qed \end{proof}

The proof of Theorem~\ref{theorem:correctness} is now immediate.

\begin{proof}[Theorem~\ref{theorem:correctness}] Straightforward by
Lemma~\ref{lemma:equivalent} and Lemma~\ref{lemma:image}.  \qed \end{proof}

\section{Replacement of Non-destructive Container Operations}
\label{app:nondestructive_full}

We now discuss our way of handling non-destructive container operations,
including, in particular, a use of \emph{iterators} to reference elements of a
list and to move along the list, \emph{initialisation of iterators} (placing an
iterator at a particular element of a list), and \emph{emptiness tests}.
With a replacement recipe $\rete$ and an assignment relation $\assign$ at hand,
recognizing non-destructive operations in an annotated CFG is a much easier task
than that of recognizing destructive operations.
Actually, for the above operations, the problem reduces to analysing annotations
of one CFG edge at a time.

We first present our handling of non-destructive container operations informally
by showing how we handle some of these operations in the running example of
Sect.~\ref{section:intro}.
For convenience, the running example is repeated in
Fig.~\ref{fig:allinonefull}\footnote{Actually, Fig.~\ref{fig:allinonefull}(c)
contains a bit more complex version of the annotated CFG than the one shown in
Fig.~\ref{fig:allinone}(c). The reason is that Fig.~\ref{fig:allinonefull}(c)
shows the actual output of Predator whereas Fig.~\ref{fig:allinone}(c) has been
slightly optimized for simplicity of the presentation. The optimization has been
based on that a DLL consisting of one region is subsumed by the set of DLLs
represented by a DLS---this subsumption is, however, not used by Predator.}.
Fig.~\ref{fig:exCFGiteration-2} then shows the annotated CFG corresponding to
lines 13--17 of the running example.
In the figure, some of the CSs are assigned the container variable $L$.
The assignment has been obtained as follows: We took the assignment of the chain
input/output parameters defined by templates of $\rete$, renamed it at
individual locations using renamings $\vsubst_\rloc$, and propagated the
obtained names of CSs along $\succcs$.
 
\begin{figure}[!t] \vspace{-4mm}
\hspace{-2,5mm} \begin{tabular}{ll} \begin{tabular}{l} \begin{tabular}{l}
\includegraphics[width=69mm]{running_example_and_result} \\
\hspace{13mm}\figincode\hspace{27mm}\figoutcode \end{tabular} \end{tabular} &
\begin{tabular}{c} \hspace{-4mm} \vspace*{-2mm}
\includegraphics[width=54mm]{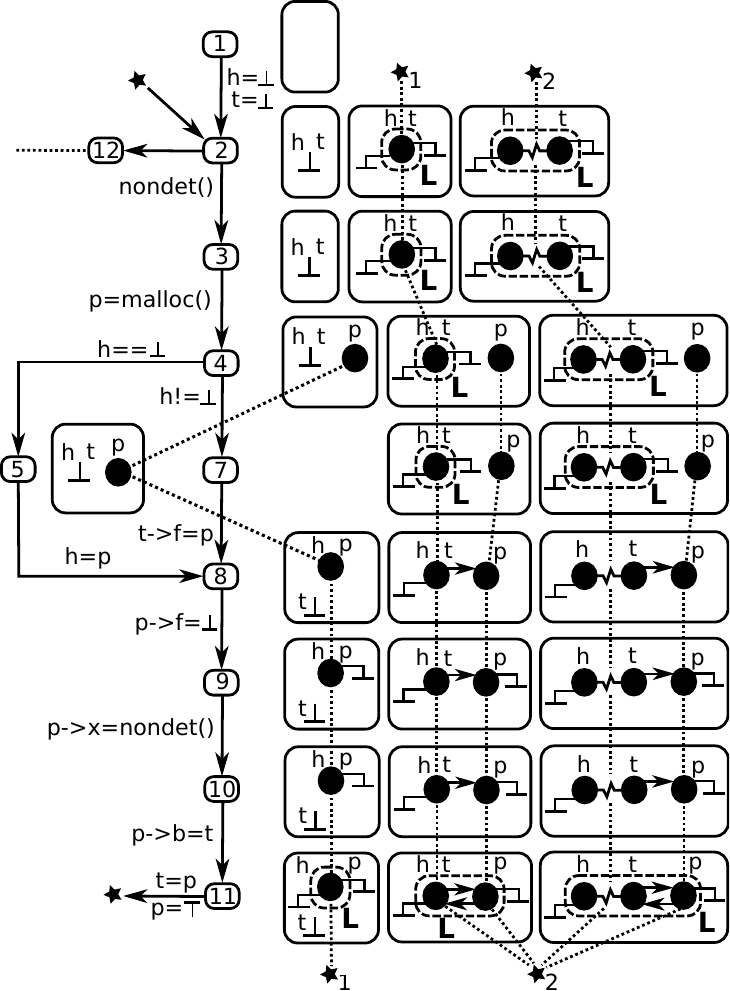} \\ \figcfgfirst
\end{tabular} \end{tabular}
\vspace*{-4mm} \caption{ The running example from Sect.~\ref{section:intro}.
\textbf{\figincode} A \texttt{C} code using low-level pointer manipulations. 
\textbf{\figoutcode} The transformed \textit{pseudo}-\texttt{C++} code using
container operations. 
\textbf{\figcfgfirst} A part of the CFG of the low-level code from Part (a)
corresponding to lines 1-12, annotated by shape invariants (in the form obtained
from the Predator tool).
%
%
}
\label{fig:allinonefull} \end{figure}

\begin{figure}[!htb]
\begin{center}
\includegraphics[height=12cm,angle=270]{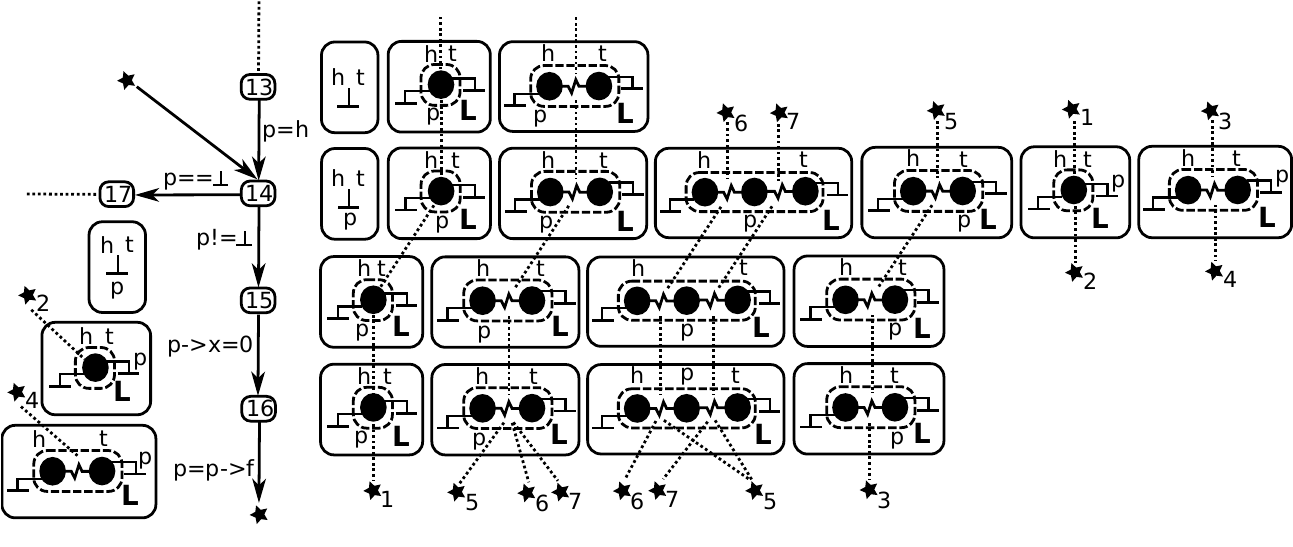}
\end{center}
\vspace*{-4mm}
\caption{The annotated CFG corresponding to lines 13--17 of 
the \texttt C code depicted in Fig.~\ref{fig:allinonefull}\figincode.}
\vspace*{-4mm}
\label{fig:exCFGiteration-2}
\end{figure}
 

Clearly, the pointer assignment on the edge $\edgeof{13}{\code{p=h}}{14}$ can be
replaced by the iterator initialisation $\code{p=}\code{front(L)}$ because on
line $13$, the variable $h$ is either at the front region of the CS $L$, or in
the case $L$ is empty, it equals $\Null$.
This is exactly what ${front(L)}$ does: it returns a pointer to the front region
of $L$, and if $L$ is empty, it returns $\Null$.
The pointer statement of the edge $\edgeof{16}{\code{p=p->f}}{14}$ can by
replaced by the list iteration $\code{p=}\code{next(L,p)}$ because in all SPCs
at line 16, $p$ points to an element of the CS $L$ and $\next$ is the binding
pointer of $L$.

%
%
In the following four paragraphs, we present detailed definitions of
common and frequently used non-destructive container operations.

\paragraph{Go-to-front.}

Let $ \edge $ be an edge of an analysed CFG labelled by either $ \mathit{p'}
\code{=} \mathit{p} $, $ \mathit{p'} \code{=} \mathit{p} \codesel \mathit{s} $,
or $ \mathit{p'} \codesel \mathit{s'} \code{=} \mathit{p} $  where $ p,p' \in
\varp $ and $ s,s' \in  \selp $. This edge is a \emph{go-to-front} element
non-destructive container operation if there is a variable $ L \in \varc $ s.t.
for each symbolic update $ (\sconf, \cdot, \cdot) \in \sop_\edge $ the following
holds true: If $ \sconf $ does not contain any container shape, then either $
\val_\sconf(\mathit{p}) = \Null $ or $ s_\sconf(\val_\sconf(\mathit{p})) = \Null
$ according to whether the first or the second syntactical form of the edge
implementing the go-to-front operation is used. Otherwise, $ \val_\sconf(L)
\not= \Undef $ and, according to the syntactical form used, either $
\val_\sconf(\mathit{p}) $ or $ s_\sconf(\val_\sconf(\mathit{p})) $ is the $
\mathit{front} $ region of $ \val_\sconf(L) $, respectively. 
%
%
We replace the label of each go-to-front edge, according to the syntactical form
used, by $ \mathit{p'} \code{=} \code{front}(L) $, $ \mathit{p'} \code{=}
\code{front}(L) $, or $ \mathit{p'} \codesel \mathit{s'} \code{=}
\code{front}(L) $, respectively.

\paragraph{Go-to-next.}

Let $ \edge $ be an edge of an analysed CFG labelled by $ \mathit{p'} \code{=}
\mathit{p} \codesel \mathit{s} $ where $ p,p' \in \varp $ and $ s \in  \selp $.
This edge is a \emph{go-to-next} element non-destructive container operation if
there is a variable $ L \in \varc $ s.t. for each symbolic update $ (\sconf,
\cdot, \cdot) \in \sop_\edge $ there is $ \val_\sconf(L) \not= \Undef $, $
\mathit{s} $ corresponds to the $ \next $ selector, and $
\val_\sconf(\mathit{p}) $ belongs to $ \val_\sconf(L) $.
We replace the label of each go-to-next edge by $ \mathit{p'} \code{=} 
\code{next}(L,\mathit{p}) $.

\paragraph{End-reached?}

Let $ \edge = \edgeof{\loc}{\stmt}{\cdot}$ be an edge of an analysed CFG s.t. $
\stmt $ is either $ \mathit{p} \code{==} (\mathit{p'} \mid \Null) $ or $
\mathit{p} \code{!=} (\mathit{p'} \mid \Null) $ where $ p,p' \in \varp $.
Let $ \mathit{p''} \in \varp $ be a fresh variable s.t. for each SPC $ \sconf
\in \mem(\loc) $ either $ \val_\sconf(\mathit{p''}) = \val_\sconf(\mathit{p'}) $
or $ \val_\sconf(\mathit{p''}) = \Null $, according to the syntactical form
used.
The edge $\edge$ is an \emph{end-reached} non-destructive container query
operation if there is a variable $ L \in \varc $, two different variables $ x,y
\in \{ \mathit{p}, \mathit{p''} \} $, two SPCs $ \sconf',\sconf'' \in \mem(\loc)
$ s.t. $ \val_{\sconf'}(L) \not= \Undef$, $ \val_{\sconf'}(x) = \Null $, $
\val_{\sconf''}(L) \not= \Undef$, $ \val_{\sconf''}(x) \not\in \{ \Null, \Undef
\} $, and for each symbolic update $ (\sconf, \cdot, \cdot) \in \sop_\edge$, the
following holds true: $ \val_\sconf(y) = \Null $ and if $ \sconf $ contains no
container shape, then $ \val_\sconf(x) = \Null $. Otherwise, $ \val_\sconf(L)
\not= \Undef$ and $ \val_\sconf(x) $ is either $ \Null $ or it belongs to $
\val_\sconf(L) $.
We replace the label $\stmt$ of each end-reached edge by $x \code{==}
\code{end}(L)$ or $x \code{!=} \code{end}(L)$, according to the syntactical form
used, respectively.

\paragraph{Is-empty?}

Let $ \edge = \edgeof{\loc}{\stmt}{\cdot}$ be an edge of an analysed CFG s.t. $
\stmt $ is either in the form $ \mathit{p} \code{==} (\mathit{p'} \mid \Null) $
or $ \mathit{p} \code{!=} (\mathit{p'} \mid \Null) $ where $ p,p' \in \varp $.
Let $ \mathit{p''} \in \varp $ be a fresh variable s.t. for each SPC $ \sconf
\in \mem(\loc) $, either $ \val_\sconf(\mathit{p''}) = \val_\sconf(\mathit{p'})
$ or $ \val_\sconf(\mathit{p''}) = \Null $, according to the syntactical form
used.
This edge is an \emph{is-empty} non-destructive container query operation if
there is a variable $ L \in \varc $, $ \sconf' \in \mem(\loc) $ s.t. $
\val_{\sconf'}(L) \not= \Undef$ and there are two different variables $ x,y \in
\{ \mathit{p}, \mathit{p''} \} $ s.t. for each symbolic update $ (\sconf, \cdot,
\cdot) \in \sop_\edge $, the following holds true: $ \val_\sconf(y) = \Null $
and if $ \sconf $ contains no container shape, then $ \val_\sconf(x) = \Null $;
otherwise, $ \val_\sconf(L) \not= \Undef $ and $ \val_\sconf(x) $ belongs to $
\val_\sconf(L) $.
We replace the label $ \stmt $ of each is-empty edge by $ 0 \code{==}
\code{is\_empty}(L) $ or $ 0 \code{!=} \code{is\_empty}(L) $, according to the
syntactical form used, respectively.

\bigskip

We have omitted definitions of non-destructive container operations
\emph{go-to-previous} and \emph{go-to-back} since they are analogous to
definitions of operations \emph{go-to-next} and \emph{go-to-front},
respectively. There are more non-destructive container operations, such as
\emph{get-length}. We have omitted them since we primarily focus on destructive
operations in this paper.


We now show how the definitions above apply to the edges of the annotated CFG
$\cfg$ corresponding to lines 13--17 of our running example from
Fig.~\ref{fig:allinonefull}{\figincode}, which is depicted in
Fig.~\ref{fig:exCFGiteration-2}.
Let us start with the edge $ \edgeof{13}{\code{p=h}}{14} $. Clearly, according
to the syntax of the statement, the edge may be a go-to-front edge only. We can
see in the figure that the leftmost SPC at location 13 does not contain any
container shape and $ \val(h) = \Null $. In all the remaining SPCs at that
location there is defined a container variable $ L $ and $ \val(h) $ always
references the \emph{front} region of $ \val(L) $. Therefore, we replace $
\code{p=h} $ by $ \code{p=}\code{front}(L) $.

Next, we consider the edge $ \edgeof{14}{\code{p!=}\bot}{15} $. According to the
syntax of the statement, it can either be an is-empty or an end-reached edge.
Since variable $ \code{p} $ does not belong to the container shape referenced by
$ L $ in the rightmost SPC at location 14, the edge cannot be an is-empty one.
Nevertheless, it satisfies all properties of an end-reached edge. In particular,
whenever $ L $ is present in an SPC, then $ \code{p} $ references either a
region of $ L $ or it is $ \Null $, and both of these cases occur at that
location (for two different SPCs). Moreover, the leftmost SPC at
location 13 does not contain any container shape and $ \val(h) $ is indeed $
\Null $.  Therefore, we replace $ \code{p!=}\bot $ by $ \code{p!=}\code{end}(L)
$. The discussion of the edge $ \edgeof{14}{\code{p==}\bot}{17} $ is similar to
the last one, so we omit it.

The statement $ \code{p->x=0} $ of the only outgoing edge from location 15 does
not syntactically match any of our non-destructive container operations, so the
label of this edge remains unchanged.

Finally, the edge $ \edgeof{16}{\code{p=p->f}}{14} $ can be either a go-to-front
or a go-to-next edge, according to the syntax of its statement. However, since $
\code{p} $ does not reference the \emph{front} region of $ L $ in the rightmost
SPC at location 16, the edge cannot be a go-to-front one. On the other hand, we
can easily check that $ L $ is defined at each SPC attached to location 16,
\code{f} represents the $ \next $ selector of $ L $, and \code{p} always
references a region of $ L $. So, this edge is a go-to-next edge, and we replace
its label by $ \code{p=}\code{next}(L,\code{p}) $.

\section{An Example of an Interleaved Use of Container Operations}
\label{sec:interleavedOps}

\begin{figure}[!t]
\begin{center}
\includegraphics[height=50mm,angle=0]{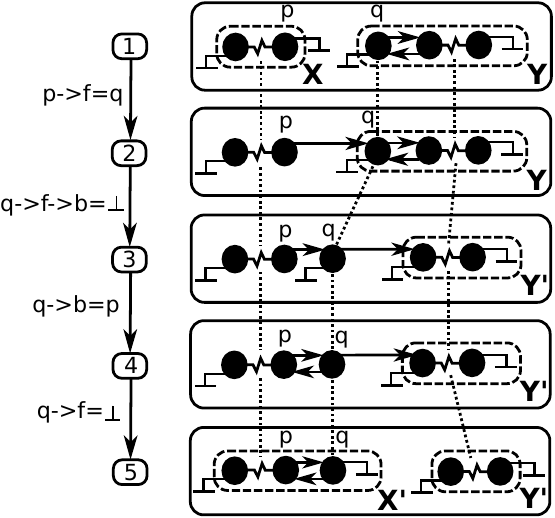}
\end{center}
\caption{
A fraction of an annotated CFG depicting interleaved operations \texttt{push\_back}
and \texttt{pop\_front} applied on CSs $ X $ and $ Y $ and the region
\texttt{q}.\
}
\label{fig:interlievedOperations}
\end{figure}

We now provide an example illustrating that our approach can handle even the
case when a programmer interleaves the code implementing some container
operations---in our case, \texttt{push\_back} and \texttt{pop\_front}.
The example is depicted in Fig.~\ref{fig:interlievedOperations}.
The implementing edges of the \texttt{push\_back} operation are $ (1,2), (3,4)
$, and $ (4,5) $.
A non-implementing edge is $(2,3)$ because it operates on the pointer
$\mathit{q} \codesel \mathit{f} \codesel \mathit{b}$ of the region $\mathit{q}
\codesel \mathit{f}$, which is not included in the specification of the
operation.
The edge $ (2,3) $ is an implementing edge of the operation \texttt{pop\_front}.

From the point of view of the \texttt{push\_back} operation, one can see that
the non-implementing edge $ (2,3) $ can be safely moved before the edge $ (1,2)
$.
From the point of view of the \texttt{pop\_front} operation, this move is also
possible since $ (1,2) $ is non-implementing, i.e., ~it can be moved after the
implementing edge $ (2,3) $.
Our algorithm considers this rearrangement of edges and so it transforms this C
code into the equivalent sequence of operations: ($\mathit{Y}'$,$\mathit{q}$) =
\texttt{pop\_front}($\mathit{Y}$); $\mathit{X}'$ =
\texttt{push\_back}($\mathit{X}$, $\mathit{q}$).

\end{document}